\definecolor{darkgreen}{rgb}{0,0.5,0}
\definecolor{darkblue}{rgb}{0,0,0.8}
\definecolor{darkred}{rgb}{0.8,0,0}
\setlist[enumerate]{nosep}
\setlist[itemize]{nosep}
\newcommand{\Oh}{\mathcal{O}}
\newtheorem{thm}{Theorem}[section]
\newtheorem{theorem}[thm]{Theorem}
\newtheorem{fact}[theorem]{Fact}
\newtheorem{proposition}[thm]{Proposition}
\newtheorem{conjecture}[thm]{Conjecture}
\newtheorem{corollary}[thm]{Corollary}
\newtheorem{lemma}[thm]{Lemma}
\newtheorem{observation}[thm]{Observation}
\newtheorem{definition}[thm]{Definition}
\newcommand{\Ohtilde}{\tilde{\mathcal{O}}}
\newcommand{\HD}{\operatorname{HD}}
\newcommand{\Occ}{\operatorname{Occ}}
\newcommand{\sub}{\subseteq}
\newcommand{\floor}[1]{\left\lfloor #1 \right\rfloor}
\newcommand{\ceil}[1]{\left\lceil #1 \right\rceil}
\newcommand{\dd}{\mathinner{.\,.}}
\newcommand{\bigO}[1]{\mathcal{O}(#1)}
\newcommand\tO[1]{\widetilde{\mathcal{O}}(#1)}
\newcommand{\Z}{\mathbb{Z}}
\newcommand\size[1]{\lvert#1\rvert}
\newcommand\Query{\mathsf{Query}}
\newcommand\Update{\mathsf{Update}}
\crefname{corollary}{Corollary}{Corollaries}
\crefname{fact}{Fact}{Facts}
\title{The Dynamic $k$-Mismatch Problem} 
\author[1]{Rapha\"el Clifford}
\author[2]{Paweł Gawrychowski}
\author[3]{Tomasz Kociumaka\thanks{Partly supported by NSF 1652303, 1909046, and HDR TRIPODS 1934846 grants, and an Alfred P. Sloan Fellowship.}}
\author[4]{Daniel P. Martin}
\author[2]{Przemys\l{}aw~Uzna\'nski\thanks{Supported by Polish National Science Centre grant 2019/33/B/ST6/00298.}}
\affil[1]{Department of Computer Science, University of Bristol, UK}
\affil[ ]{\texttt{raphael.clifford@bristol.ac.uk}}
\affil[2]{Institute of Computer Science, University~of~Wroc\l{}aw,~Poland}
\affil[ ]{\texttt{\{gawry,puznanski\}@cs.uni.wroc.pl}}
\affil[3]{University of California, Berkeley, U.S.}
\affil[ ]{\texttt{kociumaka@berkeley.edu}}
\affil[4]{The Alan Turing Institute, British Library, London, UK}
\affil[ ]{\texttt{dmartin@turing.ac.uk}}
\date{}
\begin{document}
\maketitle

\begin{abstract}
The text-to-pattern Hamming distances problem asks to compute the Hamming distances between a given pattern of length $m$ and all length-$m$ substrings of a given text of length $n\ge m$.
We focus on the well-studied $k$-mismatch version of the problem, where a distance needs to be returned only if it does not exceed a threshold $k$. Moreover, we assume $n\le 2m$ (in general, one can partition the text into overlapping blocks).
In this work, we develop data structures for the dynamic version of the $k$-mismatch problem supporting two operations:
An update performs a single-letter substitution in the pattern or the text, whereas a query, given an index $i$, returns the Hamming distance between the pattern and the text substring starting at position $i$, or reports that the distance exceeds $k$. 

First, we describe a simple data structure  with $\tO{1}$ update time and $\tO{k}$ query time. 
Through  considerably more sophisticated techniques, we show that $\tO{k}$ update time and $\tO{1}$ query time is also achievable. These two solutions likely provide an essentially optimal trade-off for the dynamic $k$-mismatch problem with $m^{\Omega(1)} \le k \le \sqrt{m}$:
we prove that, in that case, conditioned on the 3SUM conjecture, one cannot simultaneously achieve $k^{1-\Omega(1)}$ time for all operations (updates and queries) after $n^{\Oh(1)}$-time initialization.
For $k\ge \sqrt{m}$, the same lower bound excludes achieving $m^{1/2-\Omega(1)}$ time per operation.
This is known to be essentially tight for constant-sized alphabets: already Clifford et al. (STACS 2018) achieved $\Ohtilde(\sqrt{m})$ time per operation in that case,
but their solution for large alphabets costs $\Ohtilde(m^{3/4})$ time per operation.
We improve and extend the latter result by developing a trade-off algorithm that, given a parameter $1\le x\le k$,
achieves update time $\Ohtilde\big(\frac{m}{k} +\sqrt{\frac{mk}{x}}\big)$ and query time $\tO{x}$.
In particular, for $k\ge \sqrt{m}$, 
an appropriate choice of $x$ yields $\Ohtilde(\sqrt[3]{mk})$ time per operation,
which is $\Ohtilde(m^{2/3})$ when only the trivial threshold $k=m$ is provided.
\end{abstract}

\newpage

\section{Introduction}


The development of dynamic data structures for string problems has become a topic of renewed interest in recent years (see, for example,~\cite{Amir2020,Amir2021,Amir2019,amir2019longest,Charalampopoulos2020,charalampopoulos2020faster,clifford2018upper,GKKLS,KK22} and references therein).
Our focus will be on approximate pattern matching, where the general problem is as follows:
Given a pattern  of length $m$ and a longer text of length $n$,  return the value of a distance function between the pattern and substrings of the text.  

We develop a new dynamic data structures for a thresholded version of the Hamming distance function, known as the $k$-mismatch function.  In this setting, we only need to report the Hamming distance if does not exceed $k$. The $k$-mismatch problem is well studied in the offline setting, where all alignments of the pattern with the text substring that meet this threshold must be found. 
In 1980s, an $\Oh(nk)$-time algorithm was given~\cite{TCS:LanVis86}, and this stood as the record for a over a decade.
However, in the last twenty years, significant progress has been made.  
In a breakthrough result, Amir et al.~\cite{SODA:AmiLewPor00} gave $\Oh(n\sqrt{k \log k})$-time and $\Oh\big(n + \frac{k^3 \log k}{m}\big)$-time algorithms, which were subsequently improved to $\Oh(n\log^{\Oh(1)}m + \frac{nk^2\log{k}}{m})$ time~\cite{SODA:CFPST16}, $\Oh\big(n\log^2{m} \log{\sigma}+\frac{nk\sqrt{\log{m}}}{\sqrt{m}}\big)$ time~\cite{gawrychowski2018towards}, and finally to $\Oh\big(n+\min\big(\frac{nk\sqrt{\log m}}{\sqrt{m}}, \frac{nk^2}{m}\big)\big)$ time~\cite{chan2020approximating}. 


In the dynamic $k$-mismatch problem, there
are two input strings: a pattern $P$ of length $m$ and a text $T$ of length $n\ge m$. 
For a query at index $i$, the data structure
must return the Hamming distance between $P$ and $T[i\ldots i+m)$ if the Hamming distance is less than $k$, and $\infty$ otherwise.  
The queries can be interspersed with updates of the
form $\Update(S,i,x)$, which assign $S[i]:=x$, where $S$ can be either the pattern or the text. There are two naive approaches
for solving the dynamic problem. The first is to rerun a static offline
algorithm after each update, and then have constant-time queries. The second is to simply modify the input at each update  and compute the Hamming distance naively for each query. Our goal is to perform better than these naive solutions.

We primarily focus on the case when $n=\Theta(m)$ (in general, one can partition the text into $\Theta(\frac{n}{m})$ overlapping blocks of length $\Theta(m)$).
When $k=m$ and $\sigma=n^{o(1)}$, known upper bounds and conditional lower bounds match up to a subpolynomial factor:
There exists a dynamic data structure with an $\Oh(\sqrt{n \log n} \cdot \sigma)$ upper bound for both updates and queries and an almost matching $n^{1/2-\Omega(1)}$ lower bound~\cite{clifford2018upper}
conditioned on the hardness of the online matrix-vector multiplication problem. 
Although there is no existing work directly on the dynamic $k$-mismatch problem we consider,  it was shown very recently that a compact representation of all $k$-mismatch occurrences can be reported in $\tO{k^2}$ time\footnote{The $\Ohtilde(\cdot)$ notation suppresses $\log^{\Oh(1)}n$ factors.} after each $\Oh(\log n)$-time update~\cite{charalampopoulos2020faster}.

We give three data structures for the dynamic $k$-mismatch problem. The first has update
time of $\tO{1}$ and a query time of $\tO{k}$.  The main tool we use is the dynamic strings data structure~\cite{GKKLS} which allows enumerating mismatches in $\Oh(\log n)$ time each.  The second has update time $\tO{k}$ and a query time of $\tO{1}$.
Here, we build on the newly developed generic solution for the static $k$-mismatch problem from~\cite{charalampopoulos2020faster}. The third data structure, optimized for $k\ge \sqrt{n}$, gives a trade-off between update and query times.  The overall approach is a lazy rebuilding scheme using the state-of-the-art offline $k$-mismatch algorithm. In order to achieve a fast solution, we handle instances with many and few $2k$-mismatch occurrences differently. Basing on combinatorial insights developed in the sequence of papers on the offline and streaming versions of the $k$-mismatch problem~\cite{chan2020approximating,SODA:CFPST16,CKP,gawrychowski2018towards,GKKP}, we are able to achieve update time $\Ohtilde\Big(\frac{n}{k} +\sqrt{\frac{nk}{x}}\Big)$ and query time $\tO{x}$ for any trade-off parameter $x\in [1\dd k]$ provided at initialization.\footnote{Throughout this paper, we denote $[a\dd b]=\{i\in \mathbb{Z} : a\le i \le b\}$ and $[a\dd b)=\{i \in \mathbb{Z} : a \le i < b\}$.} To put the trade-off complexity in context, we note that, e.g., when $k = m$, this allows  achieving $U(n,k) = Q(n,k) = \tO{n^{2/3}}$, which improves upon an $\tO{n^{3/4}}$ bound presented in \cite{clifford2018upper} (where only the case of $k=m$ is considered).

We also show conditional lower bounds which are in most cases within subpolynomial factors of our upper bounds.  For the case where the text length is linear in the length of the pattern, we do this by reducing from the 3SUM conjecture~\cite{STOC:Patrascu10}.   However, in the case that the text is much longer than the pattern, our reduction requires the Online Matrix vector conjecture~\cite{STOC:HKNS15}. Interestingly the lower bound for the superlinear case is asymmetric between the query and update time.
\section{Preliminaries}

In this section, we provide the required basic definitions. 
We begin with the string distance metric which will be used throughout.

\begin{definition}[Hamming Distance]
The Hamming distance between two strings $S$, $R$ of the same length is defined as $\HD(S,R) = \size{\{i:S[i]\neq R[i]\}}.$
\end{definition}

From this point forward, for simplicity of exposition, we will assume that the pattern is half the length of the text. All our upper bounds are straightforward to generalise to a text whose length is linear in the length of the pattern.
In Theorem~\ref{thm:longtext}, we show higher lower bounds for the case where the text is much longer than the pattern.

We can now define the central dynamic data structure problem we consider in this paper. 

\begin{definition}[Dynamic $k$-Mismatch Problem]
Let $P$ be a pattern of length $m$ and  $T$ be a text of length $n\le 2m$. 
For $i\in [0\dd n-m]$, a query $\Query(i)$ must return $\HD(T[i\ldots i+m),P)$ if $\HD(T[i\ldots i+m),P)\le k$, and $\infty$ otherwise. The queries can be
interspersed with updates of the form $\Update(S,i,x)$ which assign $S[i]:=x$,
where $S$ can be the pattern or the text.
\end{definition}

For the remainder of the paper, we use $Q(n,k)$ and $U(n,k)$ to be the time
complexity of $\Query$ and $\Update$, respectively.
If $n > 2m$, then a standard reduction yields $\Oh(Q(m,k))$-time queries,
$\Oh(U(m,k))$-time updates in $T$, and $\Oh(\frac{n}{m}Q(m,k))$-time updates in $P$.
\newcommand{\X}{\mathcal{X}}

\section{Upper Bounds}
In this section, we provide three solutions of the dynamic $k$-mismatch problem. 
We start with a simple application of dynamic strings resulting in $\tO{k}$ query time and $\tO{1}$ update~time.

The data structure of Gawrychowski et al.~\cite{GKKLS} maintains a dynamic family $\X$ of strings of total length $N$ supporting the following updates:\footnote{This data structure is Las-Vegas randomized, and the running times are valid with high probability with respect to $N$. A deterministic version, using~\cite{MISC:AlsBroRau00} and deterministic dynamic dictionaries, has an $\Oh(\log N)$-factor overhead in the running times, which translates to an $\Oh(\log n)$-factor overhead in the query and update times of all our randomized algorithms for the dynamic $k$-mismatch~problem.}
\begin{itemize}
    \item Insert to $\X$ a given string $S$ (in time $\Oh(|S|+\log N)$).
    \item Insert to $\X$ the concatenation of two strings already in $\X$ (in time $\Oh(\log N)$).
    \item Insert to $\X$ an arbitrary prefix or suffix of a string already in $\X$ (in time $\Oh(\log N)$).
\end{itemize}
Queries include $\Oh(1)$-time computation of the longest common prefix of two strings in $\X$.

\begin{theorem}\label{thm:fastU}
There exists a Las-Vegas randomized algorithm for the dynamic $k$-mismatch problem satisfying $U(n,k)=\bigO{\log{n}}$
and $Q(n,k)=\bigO{k\log {n}}$ with high probability.
\end{theorem}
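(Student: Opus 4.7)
The plan is to maintain both the pattern $P$ and the text $T$ as members of the dynamic strings family $\X$ of Gawrychowski et al., keeping a handle to the ``current'' version of each. Since each update performs only a single substitution and each query performs at most $k+1$ LCP computations on substrings of $P$ and $T$, the total size of $\X$ will remain polynomial in $n$, hence $\log|\X|=\Oh(\log n)$.

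For an update $\Update(S,i,x)$, I would produce the new string as the three-way concatenation
\[
S[0\dd i)\cdot\langle x\rangle\cdot S[i+1\dd |S|).
\]
Each of the two factors $S[0\dd i)$ and $S[i+1\dd |S|)$ is obtained with a single prefix or suffix insertion, the one-character string $\langle x\rangle$ is inserted from scratch, and two concatenations finish the construction. All five operations cost $\Oh(\log n)$ per the dynamic strings guarantees, and the handle of the current pattern (or text) is replaced by the handle of the new string.

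For a query $\Query(i)$, I would first materialise $T[i\dd i+m)$ inside $\X$ by one prefix insertion followed by one suffix insertion, at a cost of $\Oh(\log n)$; call the resulting string $T_i$. Then I would perform kangaroo jumps: starting with $j=0$ and a mismatch counter $c=0$, I would repeatedly insert the suffixes $T_i[j\dd m)$ and $P[j\dd m)$ into $\X$ (each in $\Oh(\log n)$ time), compute their longest common prefix of length $\ell$ in $\Oh(1)$ time, and then (if $j+\ell<m$) record a mismatch at position $j+\ell$, increment $c$, and set $j:=j+\ell+1$. I would stop as soon as $c>k$ and report $\infty$, or else reach $j=m$ and return~$c$. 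Correctness is immediate: each kangaroo step identifies the next mismatch between $T_i$ and $P$, so $c$ equals $\HD(T_i,P)$ whenever the loop terminates with $j=m$, and the threshold check is exactly the definition of the $k$-mismatch output. The number of iterations is at most $k+1$, each costing $\Oh(\log n)$, giving the claimed query time.

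There is no real obstacle here; the only subtlety is that the dynamic strings structure does not support in-place edits, so one must be careful to explicitly track handles to the current pattern and text, and to bound $|\X|$ by noting that each operation inserts only $\Oh(1)$ new strings (and $\Oh(k)$ per query), which keeps $\log|\X|=\Oh(\log n)$ throughout any polynomially long sequence of operations.
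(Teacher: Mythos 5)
Your proposal is correct and follows essentially the same approach as the paper: single-character substitutions are simulated by prefix/suffix extraction and concatenation in the dynamic strings structure of Gawrychowski et al., and queries use kangaroo jumps with each longest common extension realised by inserting the two relevant suffixes and asking an LCP query, stopping after $k+1$ mismatches. Your extra bookkeeping about handles and the bound on $\log N$ is a reasonable (if implicit in the paper) addition, but the argument is otherwise the same.
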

\begin{proof}
We maintain a dynamic string collection $\X$ of~\cite{GKKLS} containing $P$ and $T$.
Given that a string $S'$ resulting from setting $S[i]:=x$ in a string $S\in \X$
is the concatenation of a prefix $S[0\dd i)$, the new character $x$,
and a suffix $S[i+1\dd |S|)$, it is straightforward to construct $S'$
with $\Oh(1)$ auxiliary strings added to $\X$. 
Hence, we implement an update in $\Oh(\log n)$ time.

Armed with this tool, we perform dynamic $k$-mismatch queries by so-called ``kangaroo jumps''~\cite{TCS:LanVis86}. That, is we align the pattern with $T[i\dd i+m)$, where $i$ is the query position in the text $T$,
and we repeatedly extend the match we have found so far until we reach a fresh mismatch. Each longest common extension query can be implemented in $\Oh(\log n)$ time. For this, we extract the relevant suffixes of $P$ and $T$ (we insert them to $\X$ in $\Oh(\log n)$ time each) and ask for their longest common prefix (which costs $\Oh(1)$ time). As we stop once $k+1$ mismatches have been found or once we have reached the end of the text or pattern, the total query time is $\Oh(k \log n)$.
\end{proof}

\subsection{Faster Queries, Slower Updates}
Given the result above, a natural question is whether there exists an approach with an efficient
query algorithm, in return for a slower update algorithm. We answer affirmatively in this section
based on a recent work of Charalampopoulos et al.~\cite{charalampopoulos2020faster}.

\subsubsection{The PILLAR model}
\def\modelname{{\tt PILLAR}\xspace}
\def\fragmentco#1#2{[#1\dd#2)}
\def\accOpName{{\tt Access}\xspace}
\def\extractOpName{{\tt Extract}\xspace}
\def\lenOpName{{\tt Length}\xspace}
\def\perOpName{{\tt Period}\xspace}
\def\lceOpName{{\tt LCP}\xspace}
\def\lcbOpName{{\tt LCP$^R$}\xspace}
\def\lceOp#1#2{{\tt LCP}(#1, #2)}
\def\lcbOp#1#2{{\tt LCP}^R(#1, #2)}
\def\ipmOp#1#2{{\tt IPM}(#1, #2)}
\def\accOp#1#2{#1[#2]}

Charalampopoulos et al.~\cite{charalampopoulos2020faster} developed a generic static algorithm for the $k$-mismatch problem.
They formalized their solution using an abstract interface, called the \emph{\modelname model},
which captures certain primitive operations that can be
implemented efficiently in all settings considered in~\cite{charalampopoulos2020faster}.
Thus, we bound the
running times in terms of~\modelname operations---if the algorithm uses more time than
\modelname operations, we also specify the extra running time.

In the \modelname model, we are given a family of~strings $\X$ for preprocessing.
The elementary objects are fragments $X\fragmentco{\ell}{r}$ of~strings $X\in \X$.
Initially, the model provides access to each $X\in \X$ interpreted as $X\fragmentco{0}{|X|}$.
Other fragments can be obtained through an \extractOpName operation.
\begin{itemize}
    \item $\extractOpName(S,\ell,r)$: Given a fragment $S$ and positions $0 \le \ell \le r
        \le |S|$, extract the (sub)fragment $S\fragmentco{\ell}{r}$,
        which is defined as $X\fragmentco{{\ell'+\ell}}{\ell'+r}$ if $S=X\fragmentco{\ell'}{r'}$ for $X\in \X$.
\end{itemize}

\noindent
Furthermore, the following primitive operations are supported in the \modelname model:
\begin{itemize}
    \item $\lceOp{S}{T}$: Compute the length of~the longest common prefix of~$S$ and $T$.
    \item $\lcbOp{S}{T}$: Compute the length of~the longest common suffix of~$S$ and $T$.
    \item $\ipmOp{P}{T}$: Assuming that $|T|\le 2|P|$, compute the occurrences of $P$ in $T$, i.e., $\Occ(P,T)=\{i\in [0\dd |T|-|P|] : P = T[i\dd i+|P|)\}$ represented as an arithmetic progression.
    \item $\accOpName(S,i)$: Retrieve the character $\accOp{S}{i}$.
    \item $\lenOpName(S)$: Compute the length $|S|$ of~the string $S$.
\end{itemize}

Among several instantiations of the model, Charalampopoulos et al.~\cite[Section 7.3]{charalampopoulos2020faster} showed that the primitive \modelname operations can be implemented in $\Oh(\log^2 N)$ time on top of the data structure for dynamic strings~\cite{GKKLS}, which we recalled above. Consequently, we are able to maintain two dynamic strings $P$ and $T$ subject to character substitutions, achieving $\Oh(\log^2 n)$-time elementary \modelname operations and $\Oh(\log n)$-time updates.

\begin{corollary}\label{cor:pillar_impl}
Let $T$ be a dynamic string of length $n$ and $P$ be a dynamic string of length $m\le n$,
both of which can be updated via substitutions of single characters.
There exists a Las-Vegas randomized data structure supporting the \modelname operations on $\X = \{T,P\}$ in $\Oh(\log^2 n)$ time w.h.p.~and updates in  $\Oh(\log n)$ time w.h.p.
\end{corollary}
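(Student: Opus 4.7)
The plan is to plug the dynamic strings data structure of~\cite{GKKLS} (already recalled and used in the proof of \Cref{thm:fastU}) into the generic \modelname-model implementation from~\cite[Section~7.3]{charalampopoulos2020faster}. The update bound is inherited directly: exactly as in \Cref{thm:fastU}, a substitution $S[i]:=x$ in $S\in\X$ is realized as $\Oh(1)$ insertions into $\X$ — extract the prefix $S\fragmentco{0}{i}$, insert the length-$1$ string~$x$, extract the suffix $S\fragmentco{i+1}{|S|}$, and perform two successive concatenations — each costing $\Oh(\log n)$ time w.h.p., so updates cost $\Oh(\log n)$ w.h.p.

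For the \modelname primitives, the easy half is straightforward to dispatch: $\extractOpName$ and $\lenOpName$ are pure bookkeeping, $\accOpName(S,i)$ reduces to extracting the fragment $S\fragmentco{i}{i+1}$ and reading its only character, and $\lceOp{S}{T}$ (resp.\ $\lcbOp{S}{T}$) reduces to one $\Oh(1)$-time longest common prefix query on strings already in $\X$, preceded by $\Oh(1)$ suffix/prefix insertions of cost $\Oh(\log n)$ each. All of these fit comfortably inside the $\Oh(\log^2 n)$ budget with room to spare.

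The genuine obstacle is implementing $\ipmOp{P}{T}$ under the promise $|T|\le 2|P|$, since we must return the entire set $\Occ(P,T)$ encoded as an arithmetic progression, not just test membership. Here the plan is to follow the classical periodicity-based recipe. By the Fine--Wilf theorem, if $|\Occ(P,T)|\ge 3$ then the occurrences form a single arithmetic progression whose common difference is the shortest period of $P$; otherwise at most two occurrences need to be reported explicitly. I would (i) locate the first candidate occurrence of $P$ inside $T$ by an exponential/binary search driven by $\Oh(\log n)$ \lceOpName queries on suffixes of $T$, (ii) locate the next candidate the same way to read off a candidate difference $d$, and (iii) verify the periodic structure and pin down the range of valid starts with $\Oh(1)$ further \lceOpName queries. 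Each such query invokes $\Oh(1)$ suffix insertions of cost $\Oh(\log n)$ before the $\Oh(1)$-time primitive, giving $\Oh(\log^2 n)$ per \modelname operation. The delicate part is the periodic-vs.-sporadic case analysis that certifies that the progression returned is really the whole of $\Occ(P,T)$; this is exactly what is carried out in~\cite[Section~7.3]{charalampopoulos2020faster}, so no new combinatorial idea is required and the corollary follows by combining the two bounds.
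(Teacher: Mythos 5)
Your proposal is correct and takes essentially the same route as the paper, which likewise just combines the $\Oh(1)$-concatenation update trick from \cref{thm:fastU} with a citation to \cite[Section~7.3]{charalampopoulos2020faster} for implementing the \modelname primitives on top of the dynamic-strings structure of \cite{GKKLS}. The one place you go beyond the citation---sketching \texttt{IPM} via an exponential/binary search driven by \lceOpName queries---would not work as stated (the location of the first occurrence is not found by a monotone search over starting positions), but since you explicitly defer that step to the cited Section~7.3, exactly as the paper does, this is not a gap.
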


\subsubsection{The Static $k$-Mismatch Problem}
The (static) $k$-mismatch problem consists in computing $\Occ_k(P,T)=\{i\in [0\dd n-m] :\allowbreak \HD(P,T[i\dd i+m))\le k\}$, with each position $i\in \Occ_k(P,T)$ reported along with the corresponding Hamming distance $d_i := \HD(P,T[i\dd i+m))$.
Charalampopoulos et al.~\cite[Theorem 3.1 and Corollary 3.5]{charalampopoulos2020faster} proved that $\Occ_k(P,T)$ 
admits a compact representation: this set can be decomposed into $\Oh\big(\frac{n}{m}\cdot k^2\big)$ disjoint arithmetic progressions so that occurrences in a single progression share the same Hamming distance~$d_i$.
Moreover, all the non-trivial progressions (i.e., progressions with two or more terms) share the same difference.
The following algorithm gives this compact representation on the output.

\begin{theorem}[{\cite[Main Theorem 8]{charalampopoulos2020faster}}]\label{thm:pillar_k}
There exists a \modelname-model algorithm that, given a pattern $P$ of length $m$, a text $T$ of length $n\ge m$, and 
a positive integer $k\le m$, solves the $k$-mismatches problem in $\Oh(\frac{n}{m}\cdot k^2 \log \log k)$ time
using $\Oh(\frac{n}{m}\cdot k^2)$ \modelname operations.
\end{theorem}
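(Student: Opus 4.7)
The plan is first to reduce to the case $|T|\le 2m$ by partitioning $T$ into $\Oh(\frac{n}{m})$ overlapping blocks of length at most $2m$ (consecutive blocks overlapping by $m$); every $k$-mismatch occurrence starts in a unique block, so it suffices to solve each block in $\Oh(k^2 \log\log k)$ time using $\Oh(k^2)$ \modelname operations and concatenate the outputs.

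For a single block, the core idea is a structural dichotomy driven by the approximate periodicity of $P$. Using $\ipmOp{Q}{P}$ for a suitable short prefix $Q := P\fragmentco{0}{\lfloor m/(8k)\rfloor}$, extract the occurrences of $Q$ in $P$ as an arithmetic progression. If this progression has few terms or a large common difference, then the set of candidate alignments of $P$ against $T$ --- obtained by crossing the $\Oh(k)$ occurrences of $Q$ in $P$ with the $\Oh(k)$ occurrences of $Q$ in $T$ (also extracted via a constant number of IPM calls on factors of $T$) --- has size $\Oh(k^2)$, and each candidate is verified by at most $k+1$ kangaroo jumps, spending $\Oh(k)$ \modelname operations (\lceOpName / \lcbOpName) per candidate. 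Conversely, if the progression is dense with a small common difference $p$, then a Fine--Wilf-style argument shows that $P$ is within Hamming distance $\Oh(k)$ of a genuinely $p$-periodic string $P_0$. In this periodic regime one invokes a structural theorem stating that $\Occ_k(P,T)$ decomposes into $\Oh(k^2)$ arithmetic progressions with common difference $p$, along each of which the Hamming distance is constant; these progressions and their distances can be obtained by an algebraic computation on two sparse error vectors --- one encoding the deviations of $P$ from $P_0$, the other encoding the deviations of a window of $T$ from its best $p$-periodic fit --- of total support $\Oh(k)$, in $\Oh(k^2 \log\log k)$ time and $\Oh(k^2)$ \modelname calls.

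The principal obstacle is the structural theorem in the periodic case: proving that approximate $p$-periodicity of $P$ forces $\Occ_k(P,T)$ to split into only $\Oh(k^2)$ arithmetic progressions --- all sharing the common difference $p$ --- and that the Hamming distance is constant along each. The combinatorial argument must control how the $\Oh(k)$ deviations of $P$ from $P_0$ interact with the local deviations of $T$ from its own $p$-periodic approximation, together with boundary effects near positions where this approximation ceases to be accurate. Packaging this into exactly $\Oh(k^2)$ \modelname operations (rather than the naive $\Oh(k^3)$ arising from examining every pair of error positions) is the most delicate part; the residual $\log\log k$ factor on internal work presumably comes from maintaining the sorted sparse vectors via predecessor structures on small universes.
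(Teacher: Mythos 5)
This statement is quoted verbatim from Charalampopoulos et al.\ [Main Theorem~8]; the present paper gives no proof of it, so your sketch can only be measured against the known argument. Its overall shape---block decomposition followed by a periodic/aperiodic dichotomy, with kangaroo-jump verification on one side and arithmetic-progression structure governed by sparse error vectors on the other---is the right one, but two of its steps fail concretely.

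In the aperiodic branch the accounting does not close: $\Oh(k^2)$ candidate alignments, each verified by up to $k+1$ kangaroo jumps, costs $\Oh(k^3)$ \modelname operations, a factor $k$ over the claimed $\Oh(k^2)$ budget. The known proofs first filter the candidates down to $\Oh(k)$ positions (consistent with $|\Occ_k(P,T)|=\Oh(k)$ in this branch, cf.\ \cref{thm:char}) via a marking scheme over roughly $2k$ disjoint \emph{aperiodic} fragments of $P$, and only then spend $\Oh(k)$ operations verifying each survivor. Relatedly, your premise that the prefix $Q$ of length $\lfloor m/(8k)\rfloor$ occurs only $\Oh(k)$ times in $T$, and that its occurrences in $P$ form a single arithmetic progression obtainable from one \texttt{IPM} call (which in any case requires $|P|\le 2|Q|$), is unjustified: $Q$ may itself be highly periodic and occur $\Theta(n)$ times in $T$ even when $P$ is far from periodic, so the dichotomy cannot be driven by the occurrence statistics of one fixed prefix; one needs the \texttt{Analyze}-style trichotomy that either exhibits many aperiodic breaks or repetitive regions in $P$, or certifies an approximate period of all of $P$. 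In the periodic branch, the entire substance of the theorem---that $\Occ_k(P,T)$ decomposes into $\Oh(k^2)$ progressions of common difference $p$ with constant Hamming distance along each, computable from two $\Oh(k)$-sparse error vectors within the stated budget (essentially \cref{thm:char} combined with \cref{prp:charmult})---is asserted and explicitly deferred rather than proved. As you yourself note, this is the principal obstacle; without it, what you have is a plan for a proof rather than a proof.
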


\subsubsection{Warm-Up Algorithm}
Intuitively, the algorithm of \cref{thm:pillar_k} precomputes the answers to all queries $\Query(i)$ with $i\in [0\dd n-m]$.
Hence, a straightforward solution to the dynamic $k$-mismatch problem would be to maintain the data structure of \cref{cor:pillar_impl}, use the algorithm of \cref{thm:pillar_k} after each update, and then retrieve the precomputed answers for each query asked.
The data structure described below follows this strategy, making sure that the compact representation of $\Occ_k(P,T)$
is augmented with infrastructure for efficient random access. 
\begin{proposition}
There exists a Las-Vegas algorithm for the dynamic $k$-mismatch problem satisfying $U(n,k)=\bigO{k^2 \log^2 n}$
and $Q(n,k)=\bigO{\log\log n}$ with high probability.
\end{proposition}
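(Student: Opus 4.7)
The plan is to rebuild the compact representation of $\Occ_k(P,T)$ from scratch after each update and to augment it with a predecessor search structure that answers point-membership queries in $\Oh(\log\log n)$ time.

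I would maintain the data structure of \cref{cor:pillar_impl} on $\X=\{P,T\}$, which supports \modelname operations in $\Oh(\log^2 n)$ time and character substitutions in $\Oh(\log n)$ time w.h.p. After refreshing this structure upon an update, I would re-invoke the static algorithm of \cref{thm:pillar_k}. Since $n\le 2m$, the factor $\tfrac{n}{m}$ is $\Oh(1)$, so this step uses $\Oh(k^2)$ \modelname operations, each costing $\Oh(\log^2 n)$, plus an extra $\Oh(k^2\log\log k)$ time, totalling $\Oh(k^2\log^2 n)$. Its output is a decomposition of $\Occ_k(P,T)$ into $\Oh(k^2)$ pairwise disjoint arithmetic progressions, each labelled with a single Hamming distance, and with a common difference $d$ shared by all non-trivial progressions.

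On top of this decomposition I would build a y-fast trie for each non-empty residue class modulo $d$: for a residue $r$, the keys are the starting positions $a$ of progressions lying in the class (singletons are stored as length-$1$ progressions), with the corresponding length $\ell$ and Hamming distance attached as satellite data. To answer $\Query(i)$, I compute $r:=i\bmod d$, locate in the class-$r$ trie the predecessor $a\le i$, and verify that $i\le a+(\ell-1)d$; on success I return the stored Hamming distance, and otherwise~$\infty$. A predecessor search costs $\Oh(\log\log n)$, and the total build time for $\Oh(k^2)$ keys drawn from the universe $[0\dd n]$ is $\Oh(k^2\log\log n)$, which is absorbed by the $\Oh(k^2\log^2 n)$ update budget.

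The main step to justify is that the predecessor search inside the single residue class $r=i\bmod d$ correctly identifies the unique progression containing $i$, if any. This relies on two guarantees from \cref{thm:pillar_k}: the progressions in $\Occ_k(P,T)$ are pairwise disjoint, and all non-trivial ones share the difference $d$. Consequently, a non-trivial progression can contain $i$ only if its starting position lies in class $r$ and is at most $i$; disjointness guarantees that at most one such progression exists, and it must be the one whose starting position is the predecessor of $i$ in class $r$. Singletons, which need not obey the common-difference rule, fit uniformly into the same scheme because the check $i\le a+(\ell-1)d$ with $\ell=1$ collapses to $i=a$, as required.
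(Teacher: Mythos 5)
Your proposal is correct and follows essentially the same route as the paper: rebuild the compact representation of $\Occ_k(P,T)$ via the static \modelname algorithm after every update, then organize the $\Oh(k^2)$ progressions by residue class modulo the common difference in a predecessor structure so that queries cost $\Oh(\log\log n)$ (the paper phrases this as run-length encoding of the answers under the ordering $(i\bmod q, i)$, which is the same thing). The only nit is that you should fix a convention (e.g.\ $d=1$) for the degenerate case where every progression is trivial and no common difference is defined.
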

\begin{proof}
We maintain a \modelname-model implementation of $\X=\{P,T\}$ using \cref{cor:pillar_impl};
this costs $\Oh(\log n)$ time per update and provides $\Oh(\log^2 n)$-time primitive \modelname operations.

Following each update, we use \cref{thm:pillar_k} so that a space-efficient representation
of $\Occ_k(P,T)$ is computed in $\Oh(k^2 \log^2 n)$ time (recall that $m = \Theta(n)$).
This output is then post-processed as described below. 
Let $q$ be the common difference of non-trivial arithmetic progression forming  $\Occ_k(P,T)$; we set $q=1$ if all progressions are trivial. 
Consider the indices $i\in [0\dd n-m]$ ordered by $(i\bmod q, i)$, that is, first by the remainder modulo $q$ and then by the index itself. 
In this ordering, each arithmetic progression contained in the output $\Occ_k(P,T)$
yields a contiguous block of indices $i$ with a common finite answer to queries $\Query(i)$.
The goal of post-processing is to store the sequence of answers using run-length encoding (with run boundaries
kept in a predecessor data structure). 
This way, for each of the $\Oh(k^2)$ arithmetic progressions in $\Occ_k(P,T)$,
the corresponding answers $\Query(i)$ can be set in $\Oh(\log \log n)$ time to the common value $d_i$ reported along with the progression.
In total, the post-processing time is therefore $\Oh(k^2 \log \log n)$.

At query time, any requested value $\Query(i)$ can be retrieved in $\Oh(\log \log n)$ time.
\end{proof}

\subsubsection{Structural Insight}
In order to improve the update time, we bring some of the combinatorial insight from~\cite{charalampopoulos2020faster}.

A string is \emph{primitive} if it is not a string power with an integer exponent strictly greater than~$1$.
For a non-empty string $Q$, we denote by $Q^\infty$ an infinite string obtained by concatenating infinitely many copies of $Q$.
For an arbitrary string $S$, we further set $\HD(S,Q^*)=\HD(S,Q^\infty[0\dd |S|))$. 
In other words, the $\HD(\cdot,\cdot^*)$ function generalizes $\HD(\cdot,\cdot)$ in that the second string is cyclically extended to match the length of the first one.
We use the same convention to define $M(S,Q^*)=\{i : S[i]\ne Q^\infty[i]\}=\{i : S[i] \ne Q[i\bmod |Q|]\}$.

\begin{proposition}[{\cite[Theorems 3.1 and 3.2]{charalampopoulos2020faster}}]\label{thm:char}
Let $P$ be a pattern of length $m$, let $T$ be a text of length $n\le \frac32 m$, and let $k\le m$ be a positive integer.
At least one of the following holds:
\begin{enumerate}
\item The number of $k$-mismatch occurrences of $P$ in $T$ is $|\Occ_k(P,T)| \le 864k$.
\item There is a primitive string $Q$ of length $|Q| \le \frac{m}{128k}$ such that $\HD(P,Q^*)<2k$.\label{it:per}
\end{enumerate}
Moreover, if $\Occ_k(P,T)\ne \emptyset$ and~\eqref{it:per}~holds, then a fragment $T'=T[\min \Occ_k(P,T)\dd m+\max \Occ_k(P,T))$
satisfies $\HD(T',Q^*)< 6k$ and every position in $\Occ_k(P,T')$ is a multiple~of~$|Q|$.
\end{proposition}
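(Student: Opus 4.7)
The plan is to prove the dichotomy by contradiction: assume $|\Occ_k(P,T)| > 864k$ and construct the primitive string $Q$ demanded by item~\eqref{it:per}; then, under the hypothesis of~\eqref{it:per} with $\Occ_k(P,T)\ne\emptyset$, derive the ``moreover'' clause via a triangle-inequality argument anchored on the primitivity of $Q$. Because $k$-mismatch occurrences start in $[0\dd n-m]\subseteq[0\dd m/2]$, more than $864k$ of them force by pigeonhole the existence of two positions $i<j$ with $d:=j-i \le \frac{m/2}{864k} < \frac{m}{128k}$. The standard overlap argument---the fragment $T[j\dd i+m)$ is aligned with $P[d\dd m)$ via the occurrence at $i$ and with $P[0\dd m-d)$ via the occurrence at $j$, each with at most $k$ mismatches---then yields by the triangle inequality $\HD(P[0\dd m-d),P[d\dd m)) \le 2k$, so $P$ has an approximate period $d$ with at most $2k$ ``breaks''.

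The main obstacle is refining this approximate period into a primitive $Q$ for which $\HD(P,Q^*)<2k$, since the naive choice $Q:=P[0\dd d)$ fails: a single break in a residue class modulo $d$ can on its own contribute $\Theta(m/d)$ mismatches to $\HD(P,Q^*)$. I would resolve this by leveraging that $864k$ occurrences yield not one but many close pairs, hence many small approximate periods of $P$, and consolidating them via a Fine--Wilf-style analysis for approximate periodicity (in the spirit of~\cite{CKP,GKKP,chan2020approximating}) into a common primitive refinement~$Q$. The residue-class structure of the iteratively refined period keeps $|Q|\le\frac{m}{128k}$ while the combined break budget fits strictly below $2k$; the constants $864$ and $128$ are calibrated precisely to produce the slack this step requires.

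For the ``moreover'' clause, I would first show that every $i\in\Occ_k(P,T)$ satisfies $s:=i-i_{\min}\equiv 0\pmod{|Q|}$. The overlap argument applied to the occurrences at $i_{\min}$ and $i$ gives $\HD(P[0\dd m-s),P[s\dd m))\le 2k$; combining this with $\HD(P,Q^*)<2k$ on both halves via the triangle inequality yields $\HD(Q^\infty[0\dd m-s),Q^\infty[s\dd m))< 6k$. On the other hand, if $s\not\equiv 0\pmod{|Q|}$, primitivity of $Q$ forces at least one residue class to disagree between the two shifted copies of $Q^\infty$, so this Hamming distance is at least $\lfloor(m-s)/|Q|\rfloor \ge 64k$ (using $m-s\ge m/2$ and $|Q|\le\frac{m}{128k}$)---a contradiction. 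Hence $s$, and in particular $s_{\max}:=i_{\max}-i_{\min}$, is a multiple of $|Q|$, which via the bijection $i\mapsto i-i_{\min}$ is exactly the claim that every position in $\Occ_k(P,T')$ is a multiple of $|Q|$. With $s_{\max}$ now aligned to $|Q|$, the text $T'$ is covered by the two extreme $k$-mismatch copies of $P$, so applying the triangle inequality $T'\leadsto P\leadsto Q^\infty$ separately on $[0,m)$ and on $[m,|T'|)$ yields $\HD(T',Q^*)< (k+2k)+(k+2k)=6k$, completing the proof.
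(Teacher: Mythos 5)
First, a point of comparison: the paper does not prove this proposition at all --- it is imported verbatim from Charalampopoulos et al.~\cite{charalampopoulos2020faster} (their Theorems 3.1 and 3.2), so there is no in-paper proof to match. Judged on its own merits, your argument for the ``moreover'' clause is essentially correct and complete: bounding $\HD(Q^\infty[0\dd m-s),Q^\infty[s\dd m))<6k$ via the triangle inequality through $P$, observing that $s\not\equiv 0\pmod{|Q|}$ together with primitivity of $Q$ would force at least $\lfloor(m-s)/|Q|\rfloor\ge 64k$ mismatches in that self-comparison (using $s\le n-m\le m/2$ and $|Q|\le\frac{m}{128k}$), and then splitting $T'$ into $[0\dd m)$ and $[m\dd |T'|)$, covering each window by an extreme occurrence of $P$ and using $s_{\max}\equiv 0\pmod{|Q|}$ to identify $Q^\infty[m\dd m+s_{\max})$ with $Q^\infty[m-s_{\max}\dd m)$, giving $(k+2k)+(k+2k)<6k$. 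That half stands.

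The dichotomy, however, has a genuine gap exactly where you flag ``the main obstacle.'' From $|\Occ_k(P,T)|>864k$ you correctly extract, by pigeonhole and the standard overlap argument, a small $2k$-period $d<\frac{m}{128k}$ of $P$, and you correctly note that $Q:=P[0\dd d)$ need not satisfy $\HD(P,Q^*)<2k$, since a single period break can propagate into $\Theta(m/d)$ mismatches against any fixed $Q^\infty$. But your proposed repair --- ``many close pairs give many approximate periods; consolidate them via a Fine--Wilf-style analysis into a common primitive refinement whose combined break budget fits strictly below $2k$'' --- is a restatement of the goal rather than an argument. Approximate Fine--Wilf lemmas \emph{add} the error terms of the two periods being combined rather than shrinking them, and nothing in the sketch explains how one passes from a bound on the number of period \emph{breaks} to a bound on the total Hamming distance $\HD(P,Q^*)$, which is the quantity the statement controls; nor is it shown why the constants $864$ and $128$ give the claimed slack. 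This step is the actual content of Theorem 3.1 in~\cite{charalampopoulos2020faster}, where it is established by a nontrivial structural analysis of $P$ (locating either $\Omega(k)$ disjoint breaks, or a family of repetitive regions, or a global approximate period with a short primitive root) together with counting arguments showing that the first two outcomes cap $|\Occ_k(P,T)|$ at $O(k)$. Without supplying that analysis or an explicit substitute for it, the first half of the proposition remains unproven.
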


We also need a characterization of the values $\HD(P,T'[j|Q|\dd m+j|Q|))$.
\begin{proposition}[{\cite[Lemma 3.3 and Claim 3.4]{charalampopoulos2020faster}}]\label{prp:charmult}
Let $P$ be a pattern of length $m$, let $T$ be a text of length $n$,
and let $k\le m$ be a positive integer. 
For any non-empty string $Q$ and non-negative integer $j \le \frac{n-m}{|Q|}$,
we have \[\HD(P,T[j|Q|\dd m+j|Q|)) = |M(P,Q^*)|+|M(T,Q^*)\cap [j|Q|\dd m+j|Q|)| - \mu_j,\]
where \[\mu_j = \sum\limits_{\rho\in M(P,Q^*),\tau \in M(T,Q^*)\;:\; \tau = j|Q|+\rho} 2-\HD(T[\tau],P[\rho]).\]
\end{proposition}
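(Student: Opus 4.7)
The key observation is that since the shift $j|Q|$ is an integer multiple of $|Q|$, the infinite string $Q^\infty$ satisfies $Q^\infty[i] = Q^\infty[j|Q|+i]$ for every $i \in [0\dd m)$. Thus both $P[i]$ and $T[j|Q|+i]$ can be meaningfully compared against the same ``reference'' character $Q^\infty[i]$, which I would exploit by a position-by-position case analysis.

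Concretely, I would fix $i \in [0\dd m)$ and partition into four cases according to the two indicator bits $a(i) = [P[i] \ne Q^\infty[i]]$ and $b(i) = [T[j|Q|+i] \ne Q^\infty[i]]$. The aggregate counts satisfy $|M(P,Q^*)| = \#\{i : a(i)=1\}$ and $|M(T,Q^*) \cap [j|Q|\dd m+j|Q|)| = \#\{i : b(i)=1\}$, so their sum counts each position in class $(1,0)$ once, each position in class $(0,1)$ once, and each position in class $(1,1)$ twice. In classes $(0,0)$, $(1,0)$, and $(0,1)$ the contribution to $\HD(P,T[j|Q|\dd m+j|Q|))$ is forced (respectively $0$, $1$, and $1$), and for class $(1,1)$ the contribution is exactly $\HD(T[j|Q|+i],P[i]) \in \{0,1\}$ since both characters differ from $Q^\infty[i]$ and either coincide or not.

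Summing up, I obtain
\[
|M(P,Q^*)| + |M(T,Q^*)\cap[j|Q|\dd m+j|Q|)| - \HD(P,T[j|Q|\dd m+j|Q|))
\]
equals twice the number of class-$(1,1)$ positions minus the number of class-$(1,1)$ mismatches, which rewrites as $\sum_{i : a(i)=b(i)=1}\bigl(2 - \HD(T[j|Q|+i],P[i])\bigr)$. The indexing condition $a(i)=b(i)=1$ is equivalent to $\rho := i \in M(P,Q^*)$ and $\tau := j|Q|+\rho \in M(T,Q^*)$, which is precisely the sum defining $\mu_j$. Rearranging gives the claimed formula.

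The only real subtlety is keeping the bookkeeping clean: the identification $Q^\infty[i] = Q^\infty[j|Q|+i]$ must be justified by the definition of $Q^\infty$ (periodicity with period $|Q|$), and the reindexing $\tau = j|Q|+\rho$ needs to be checked so that $\rho$ really ranges over $M(P,Q^*) \subseteq [0\dd m)$ while $\tau$ simultaneously lies in $M(T,Q^*) \cap [j|Q|\dd m+j|Q|)$. I expect no other obstacles, as the argument is a routine double-counting once the four cases are laid out.
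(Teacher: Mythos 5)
Your double-counting argument is correct: the four-way case split on the indicator pair $\bigl([P[i]\ne Q^\infty[i]],\,[T[j|Q|+i]\ne Q^\infty[i]]\bigr)$, justified by the periodicity identity $Q^\infty[j|Q|+i]=Q^\infty[i]$, accounts for every position exactly as needed, and the reindexing $\tau=j|Q|+\rho$ matches the sum defining $\mu_j$. Note that the paper does not prove this proposition at all — it is imported verbatim from Charalampopoulos et al.\ — so your self-contained verification is the natural (and essentially the only) argument one would give.
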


\subsubsection{Improved Solution}
The idea behind achieving $\Oh(k\log^2 n)$ update time is to run \cref{thm:pillar_k}
once every $k$ updates, but with a doubled threshold $2k$ instead of $k$.
The motivation  behind this choice of parameters is that if
the current instance $P,T$ is obtained by up to $k$ substitutions from a past instance $\bar{P},\bar{T}$,
then $\HD(P,\bar{P})+\HD(T,\bar{T})\le k$ yields $\Occ_k(P,T) \sub \Occ_{2k}(\bar{P},\bar{T})$.
Consequently, the algorithm may safely return $\infty$ while answering $\Query(i)$
for any position $i\notin \Occ_{2k}(\bar{P},\bar{T})$.

If the application of \cref{thm:pillar_k} identifies few $2k$-mismatch occurrences, then 
we maintain the Hamming distances $d_i$ at these positions throughout the $k$ subsequent updates.
Otherwise, we identify $Q$ and $T'$, as defined in \cref{thm:char}, as well as the sets $M(P,Q^*)$, $M(T',Q^*)$,
and the values $\mu_j$ of \cref{prp:charmult}
so that the distances $\HD(P,T'[j|Q|\dd m+j|Q|))$ can be retrieved efficiently.

The latter task requires extending \cref{thm:pillar_k} so that the string $Q$
and the sets $M(P,Q^*)$, $M(T',Q^*)$ can be constructed
whenever there are many $k$-mismatch occurrences.
\begin{lemma}\label{lem:pillar_k}
    There exists a \modelname-model algorithm that, given a pattern $P$ of length $m$, a text $T$ of length $n\le \frac32 m$, and a positive integer $k\le m$, returns $\Occ_{k}(P,T)$ 
    along with the corresponding Hamming distances provided that $|\Occ_{k}(P,T)|\le 864k$,
    or, otherwise, returns the fragment $T'=T[\min \Occ_k(P,T)\dd m+\max \Occ_k(P,T))$,
    a string $Q$ such that $\HD(P,Q^*)<2k$, $\HD(T',Q^*)<6k$, and $\Occ_{k}(P,T')$ consists of multiples of $|Q|$,
    and sets $M(P,Q^*)$, $M(T',Q^*)$.
    The algorithm takes $\Oh(k^2 \log \log k)$ time plus $\Oh(k^2)$ \modelname operations.
\end{lemma}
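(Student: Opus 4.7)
The plan is to invoke \cref{thm:pillar_k} once on $(P,T,k)$ at a cost of $\Oh(k^2 \log\log k)$ time and $\Oh(k^2)$ \modelname operations, and then post-process its compact output within the same budget. \cref{thm:pillar_k} hands back a decomposition of $\Occ_k(P,T)$ into $\Oh(k^2)$ arithmetic progressions annotated with common Hamming distances, and all non-trivial progressions share a common difference. If $|\Occ_k(P,T)| \le 864k$, I will enumerate the positions with their distances directly from this representation in $\Oh(k^2)$ time and return the resulting list.

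Otherwise, \cref{thm:char}\eqref{it:per} is forced, so a primitive $Q$ with $|Q| \le m/(128k)$ and $\HD(P,Q^*) < 2k$ exists and the elements of $\Occ_k(P,T')$ are multiples of $|Q|$. This $Q$ is unique: by triangle inequality, two primitive strings of a common length $|Q|$ both within Hamming distance $2k$ of $P$ would force their infinite periodic extensions to disagree at $\le 4k$ of the first $m$ characters, yet distinct primitive strings of length $|Q|$ give extensions disagreeing at $\ge \lfloor m/|Q|\rfloor \ge 128k$ of those characters. I would first locate $Q$ as a chunk of $P$: since at most $2k$ of the $\ge 128k$ windows $P[j|Q|\dd(j+1)|Q|)$ are corrupted by elements of $M(P,Q^*)$, I iterate $j = 0, 1, 2, \ldots$ and use a test of the form $\lceOp{P[j|Q|\dd m)}{P[(j+1)|Q|\dd m)}\ge |Q|$ to flag a clean chunk; $\Oh(k)$ tests suffice since only $\Oh(k)$ pairs $(j,j+1)$ can involve a corrupted chunk, and candidates are confirmed by a kangaroo-jump verification against threshold $2k$. (The candidate length $|Q|$ itself, needed to set up the tests, is recovered from the periodicity detected inside \cref{thm:pillar_k}: in the periodic branch, the algorithm of \cite{charalampopoulos2020faster} already manipulates $Q$ internally, so its running-time analysis carries over once we expose $|Q|$ in the output.) The fragment $T' = T[\min\Occ_k(P,T)\dd m+\max\Occ_k(P,T))$ is then produced by one \extractOpName call using the extrema of the AP decomposition.

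The hard part is performing the kangaroo-jumping verifications and, once $Q$ is fixed, computing the mismatch sets $M(P,Q^*)$ and $M(T',Q^*)$ inside the \modelname model, which offers no concatenation primitive and hence no direct way to materialize $Q^\infty$. I would circumvent this by immediately extending the clean chunk at $j^*|Q|$ into a maximal pure $Q$-run $P[j^*|Q|\dd j^*|Q|+L)$ via a single $\lceOp{P[j^*|Q|\dd m)}{P[(j^*+1)|Q|\dd m)}$ call; averaging the $\le 2k$ corrupted chunks over the $\ge m/|Q|\ge 128k$ chunks shows this run spans at least $(m/|Q| - 2k)/(2k+1) \ge 42$ consecutive clean chunks, so $L \ge 42|Q|$. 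This run substitutes for $Q^\infty$ in every subsequent \lceOpName call, each covering $L$ aligned characters; thus $\Oh(m/L + k) = \Oh(k)$ such calls suffice to traverse $P$ (or analogously $T'$) while registering the $\Oh(k)$ mismatches, and the analogue for $T'$ uses the same run since its length $\Omega(m/k)$ also dominates $|T'|$ divided by the same quantity. With $\Oh(k)$ candidate verifications of $\Oh(k)$ \modelname operations each, plus $\Oh(k)$ \lceOpName calls for each of $M(P,Q^*)$ and $M(T',Q^*)$, the post-processing stays within $\Oh(k^2)$ \modelname operations and $\Oh(k^2\log\log k)$ additional time, matching the target bounds of \cref{thm:pillar_k}.
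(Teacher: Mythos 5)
Your proposal follows the same skeleton as the paper's proof: one call to \cref{thm:pillar_k} with its compact output, a case split on $|\Occ_k(P,T)|$, and in the periodic case the production of $Q$, $T'$, $M(P,Q^*)$, and $M(T',Q^*)$. Where you diverge is in \emph{how} $Q$ and the mismatch sets are obtained. The paper simply invokes two further tools from~\cite{charalampopoulos2020faster}: the $\texttt{Analyze}(P,k)$ procedure (their Lemma~4.4), which directly returns a primitive $Q$ with $|Q|\le \frac{m}{128k}$ and $\HD(P,Q^*)<2k$ in $\Oh(k)$ \modelname operations (the other possible outcomes of $\texttt{Analyze}$ are ruled out by $|\Occ_k(P,T)|>864k$), and their Corollary~4.2, which reports $M(P,Q^*)$ and $M(T',Q^*)$ in time proportional to the output. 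Your replacement for the second tool --- extending a verified clean chunk into a pure $Q$-run of length $\Omega(m/k)$ and using that run as a stand-in for $Q^\infty$ in kangaroo jumps --- is a legitimate and essentially correct alternative, and your uniqueness argument for $Q$ at a fixed length is fine.

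The genuine soft spot is the recovery of $|Q|$ itself. Your entire chunk-search is parameterized by $|Q|$, and you obtain it only by asserting that the black box of \cref{thm:pillar_k} ``already manipulates $Q$ internally, so \dots expose $|Q|$ in the output.'' As stated, \cref{thm:pillar_k} returns only arithmetic progressions and distances; the common difference of the non-trivial progressions is a multiple of $|Q|$ but need not equal it, so you cannot read $|Q|$ off the stated interface, and appealing to unspecified internals is exactly the step that needs a concrete citation or a from-scratch period-finding routine. (Note that once you do cite the procedure that exposes $Q$, as the paper does, your whole chunk-search and candidate-verification stage becomes unnecessary.) A second, minor slip: the pigeonhole argument guarantees that \emph{some} maximal run of clean chunks has length $\Omega(m/k)$, but taking the \emph{first} chunk that passes your equality test may land you in a short run; you must enumerate the $\Oh(k)$ maximal runs (which costs only $\Oh(k)$ further \lceOpName calls) and take the longest before the bound $m/L=\Oh(k)$ on the number of subsequent jumps applies. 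With those two repairs the argument goes through within the claimed $\Oh(k^2)$ \modelname operations.
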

\begin{proof}
First, we use \cref{thm:pillar_k} in order to construct $\Occ_k(P,T)$ in a compact representation
as $\Oh(k^2)$ arithmetic progressions. 
Based on this representation, both $|\Occ_k(P,T)|$ and $T'$ can be computed in $\Oh(k^2)$ time.
If $|\Occ_{k}(P,T)|\le 864k$, then $\Occ_k(P,T)$ is converted to a plain representation (with each position
reported explicitly along with the corresponding Hamming distance).
Otherwise, we use the $\texttt{Analyze}(P,k)$ procedure of~\cite[Lemma 4.4]{charalampopoulos2020faster}.
This procedure costs $\Oh(k)$ time in the \modelname model, and it detects
a structure within the pattern $P$ that can be of one of three types.
A possible outcome includes a primitive string $Q$ such that $|Q|\le \frac{m}{128k}$ and $\HD(P,Q^*)<8k$.
Moreover, the existence of a structure of either of the other two types contradicts $|\Occ_{k}(P,T)|\le 864k$
(due to~\cite[Lemmas 3.8 and 3.11]{charalampopoulos2020faster}),
and so does $2k \le \HD(P,Q^*)<8k$ (due to~\cite[Lemma 3.14]{charalampopoulos2020faster}).
Consequently, we are guaranteed to obtain a primitive string $Q$ such that  $|Q|\le \frac{m}{128k}$ and $\HD(P,Q^*)<2k$, which are precisely the conditions in the second case of \cref{thm:char}.
Thus, we conclude that $\HD(T',Q^*)<6k$ and that $\Occ_{k}(P,T')$ consists of multiples of $|Q|$.
It remains to report $M(P,Q^*)$ and $M(T',Q^*)$.
For this task, we employ~\cite[Corollary 4.2]{charalampopoulos2020faster},
whose time cost in the \modelname model is proportional to the output size, i.e., $\Oh(k)$ for both instances.
\end{proof}

We are now ready to describe the dynamic algorithm based on the intuition above.
Initially, we only improve the \emph{amortized} query time from $\Oh(k^2\log^2 n)$ to $\Oh(k\log^2 n)$.
\begin{proposition}\label{prp:amort}
There exists a Las-Vegas randomized algorithm for the dynamic $k$-mismatch problem satisfying $Q(n,k)=\bigO{\log \log n}$ and $U(n,k)=\bigO{k + \log n}$ with high probability, except that every $k$th update costs $\bigO{k^2 \log^2 n}$ time~w.h.p.
\end{proposition}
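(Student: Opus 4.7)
The plan is to apply \cref{lem:pillar_k} with the doubled threshold $2k$ at initialization and then once every $k$ updates. Let $\bar{P}$ and $\bar{T}$ denote the strings at the most recent rebuild; between rebuilds at most $k$ substitutions are applied in total, so $|\HD(P,T[i\dd i+m))-\HD(\bar{P},\bar{T}[i\dd i+m))|\le k$ for every $i$, and therefore $\Occ_k(P,T)\subseteq \Occ_{2k}(\bar{P},\bar{T})$. Consequently, any query position $i\notin \Occ_{2k}(\bar{P},\bar{T})$ may be safely answered by $\infty$. I would maintain $P$ and $T$ both in the dynamic-strings structure of \cref{cor:pillar_impl} (needed by the rebuild) and as plain arrays (for $\Oh(1)$ character access).

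A rebuild runs \cref{lem:pillar_k} with threshold $2k$ at a cost of $\Oh(k^2\log^2 n)$ time w.h.p. and yields one of two outcomes. In \emph{Case A} (at most $1728k$ positions in $\Occ_{2k}(\bar{P},\bar{T})$), the lemma also returns the Hamming distance $d_i$ for each such $i$; I would store the pairs $(i,d_i)$ in a $y$-fast trie on $[0\dd n]$ supporting $\Oh(\log\log n)$-time membership and retrieval. Each subsequent update $\Update(S,\ell,x)$ sweeps through all $\Oh(k)$ candidate positions and, in $\Oh(1)$ time each, increments or decrements $d_i$ according to whether the substitution creates or removes a mismatch between the aligned characters (looked up in the plain arrays). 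A query $\Query(i)$ locates $i$ in the trie and returns $d_i$ if $d_i\le k$, or $\infty$ otherwise.

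In \emph{Case B}, the lemma returns a primitive $Q$ with $|Q|\le\frac{m}{256k}$, the fragment $\bar{T}'$, and the mismatch sets $M(\bar{P},Q^*)$, $M(\bar{T}',Q^*)$, of sizes less than $4k$ and $12k$. Setting $i_0=\min \Occ_{2k}(\bar{P},\bar{T})$ and $J=(|\bar{T}'|-m)/|Q|$, the bounds $|\bar{T}'|\le\frac{3}{2}m$ and $|Q|\le\frac{m}{256k}$ give $J=\Oh(k)$. I would maintain an array $A[0\dd J]$ with $A[j]=\HD(P,T[i_0+j|Q|\dd i_0+j|Q|+m))$. Its initial values are computed in $\Oh(k^2)$ time via \cref{prp:charmult}: $|M(\bar{P},Q^*)|$ is a counter, the range count over $M(\bar{T}',Q^*)$ is read off from prefix sums on a sorted copy, and $\mu_j$ is accumulated by scanning the $\Oh(k^2)$ pairs $(\rho,\tau)\in M(\bar{P},Q^*)\times M(\bar{T}',Q^*)$ with $\tau\equiv\rho\pmod{|Q|}$, binned by $j=(\tau-\rho)/|Q|$. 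Each subsequent update touches at most $J+1=\Oh(k)$ entries of $A$ (every $j$ for pattern updates; the $\Oh(m/|Q|)=\Oh(k)$ valid $j$ with $i_0+j|Q|\le\ell<i_0+j|Q|+m$ for text updates), each correction costing $\Oh(1)$ via the plain arrays. A query $\Query(i)$ checks in $\Oh(1)$ that $i-i_0$ is a non-negative multiple of $|Q|$ at most $J\cdot|Q|$; if so, it returns $A[(i-i_0)/|Q|]$ capped at $\infty$ when it exceeds $k$, and otherwise $\infty$.

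The main obstacle is Case B: one must confirm that each update touches only $\Oh(k)$ entries of $A$, and this hinges on the tight bound $|Q|\le\frac{m}{256k}$ supplied by \cref{lem:pillar_k} applied with threshold $2k$, which in turn is the whole point of doubling the threshold at rebuild time. Summing the $\Oh(\log n)$ w.h.p.\ dynamic-strings update, the $\Oh(k)$ deterministic bookkeeping, and the $\Oh(k^2\log^2 n)$ w.h.p.\ rebuild amortized over every $k$th step yields $U(n,k)=\Oh(k+\log n)$ w.h.p.\ with the claimed exceptional cost, while all queries terminate in $\Oh(\log\log n)$ worst-case time, as required.
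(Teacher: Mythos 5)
Your epoch structure, the doubled threshold, the containment $\Occ_k(P,T)\subseteq\Occ_{2k}(\bar P,\bar T)$, and all of Case~A match the paper's proof. The gap is in Case~B, and it is fatal to the stated update bound. You claim that $J=(|\bar T'|-m)/|Q|=\Oh(k)$ "because $|\bar T'|\le\frac32 m$ and $|Q|\le\frac{m}{256k}$", but this inequality points the wrong way: an \emph{upper} bound on $|Q|$ yields $\frac{m}{|Q|}\ge 256k$, i.e.\ a \emph{lower} bound on the number of relevant shifts. In fact $J$ can be $\Theta(n)$ --- take $P=\texttt{a}^m$ and $T=\texttt{a}^n$, where $|Q|=1$ and every position of $[0\dd n-m]$ is relevant. (This is unavoidable in Case~B: that case is triggered precisely when $|\Occ_{2k}(\bar P,\bar T)|>1728k$, so the number of relevant positions always exceeds $1728k$ and is not bounded by any function of $k$ alone.) Consequently, your explicit array $A[0\dd J]$ cannot be maintained within the budget: a pattern update changes potentially every entry of $A$, and a text update changes $\Theta(m/|Q|)$ entries, both of which can be $\Theta(n)\gg k$. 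Even the initialization of $A$ costs $\Omega(J)$, which is fine inside the $\Oh(k^2\log^2 n)$ rebuild only when $n=\Oh(k^2\log^2 n)$.

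The paper circumvents exactly this by never materializing the per-shift distances. It maintains only the $\Oh(k)$-size ingredients of \cref{prp:charmult}: the counter $|M(P,Q^*)|$, the set $M(T',Q^*)$ in a predecessor structure with ranks, and the (at most $\Oh(k)$ per update affected) nonzero correction terms $\mu_j$. An update touches $\Oh(1)$ elements of the mismatch sets and iterates over one mismatch set of size $\Oh(k)$ to fix the $\mu_j$'s, giving $\Oh(k+\log n)$ per update; the query then \emph{assembles} $\HD(P,T[i\dd i+m))=|M(P,Q^*)|+|M(T',Q^*)\cap[j|Q|\dd j|Q|+m)|-\mu_j$ on the fly, with the range count answered in $\Oh(\log\log n)$ time from the predecessor structure. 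To repair your argument you would have to replace the array $A$ by this lazy, query-time evaluation (or an equivalent structure supporting range-restricted mismatch counts); the rest of your write-up then goes through.
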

\begin{proof} 
The algorithm logically partitions its runtime into epochs, with $k$ updates in each epoch.
The first update in every epoch costs $\Oh(k^2 \log^2 n)$ time, and the remaining updates cost $\Oh(k + \log n)$ time.
A representation of $\X=\{P,T\}$ supporting the \modelname operations (\cref{cor:pillar_impl})
is maintained throughout the execution of the algorithm, while the remaining data is destroyed
after each epoch.

Once the arrival of an update marks the beginning of a new epoch,
we run the algorithm of \cref{lem:pillar_k} with a doubled threshold $2k$.
This procedure costs $\Oh(k^2 \log^2 n)$ time, and it may have one of two types of outcome.

The first possibility is that it returns a set $O:=\Occ_{2k}(P,T)$ of up to $1728k$ positions,
with the Hamming distance $d_i := \HD(P,T[i\dd i+m))$ reported along with each position $i\in O$.
Since $d_i > 2k$ for $i\notin O$ and any update may decrease $d_i$ by at most one,
we are guaranteed that $\Query(i)=\infty$ can be returned for $i\notin O$ for the duration of the epoch.
Consequently, the algorithm only maintains $d_i$ for $i\in O$. 
For each of the subsequent updates, the algorithm iterates over $i\in O$ and checks if $d_i$
needs to be changed: If the update involves $P[j]$, then both the old and the new value of $P[j]$
are compared against $T[i+j]$. Similarly, if the update involves $T[j]$ and $j\in [i\dd i+m)$, then both the old and the new value of $T[j]$ are compared against $P[i-j]$. 
Thus, the update time is $\Oh(k)$ and the query time is $\Oh(1)$.

The second possibility is that the algorithm of \cref{lem:pillar_k} results in a fragment $T'=T[\ell \dd r)$, a string $Q$, and the mismatching positions $M(P,Q^*)$ and $M(T',Q^*)$.
We are then guaranteed that each $2k$-mismatch occurrence of $P$ in $T$ starts at a position $i\in [\ell\dd r-m]$
congruent to $\ell$ modulo $|Q|$. We call these positions \emph{relevant}.
As in the previous case, $\Query(i)=\infty$ can be returned for irrelevant $i$ for the duration of the epoch. 
The Hamming distances $d_i$ at relevant positions are computed using \cref{prp:charmult}.
For this, we maintain $M(P,Q^*)$, $M(T',Q^*)$, and all non-zero values $\mu_j$ for $j\in [0\dd \lfloor\frac{r-\ell-m}{|Q|}\rfloor]$. Moreover, $M(T',Q^*)$ is stored in a predecessor data structure,
and each element of $M(T',Q^*)$  maintains its rank in this set.
Every subsequent update affects at most one element of $M(P,Q^*)$ or $M(T',Q^*)$,
so these sets can be updated in $\Oh(1)$ time. Maintaining the predecessor data structure costs further $\Oh(\log \log n)$ time, and maintaining the ranks costs up to $\Oh(\HD(T',Q^*))$ time.
In order to update the values $\mu_j$, we proceed as follows.
If the update involves a character $P[\rho]$, we iterate over $\tau \in M(T',Q^*)$. If $j={\tau-\rho}{|Q|}$ is an integer between $0$ and  $\frac{r-\ell-m}{|Q|}$,
we may need to update the entry $\mu_j$ (which costs constant time).
An update involving $T[\ell+\tau]$ is processed in a similar way.
Overall, the update time is $\Oh(\log n + \HD(T',Q^*) +\HD(P,Q^*))=\Oh(\log n + k)$
because $\HD(P,Q^*)+\HD(T',Q^*)<2k+6k+k=9k$ holds for the duration of the epoch.

As for the query $\Query(i)$, we return $\infty$ if $i$ is irrelevant, i.e., $i < \ell$, $i > r-m$, or $i \not\equiv \ell \pmod{|Q|}$.
Otherwise, we set $j=\frac{i-\ell}{|Q|}$ and, according to \cref{prp:charmult}, return
$|M(P,Q^*)| + |M(T',Q^*)\cap [j|Q|\dd j|Q|+m)|-\mu_j$.
The second term is determined in $\Oh(\log \log n)$ time using the predecessor data structure
on top of $M(T',Q^*)$ as well as the rank stored for each element of this set.
\end{proof}

Finally, we show how to achieve \emph{worst-case} $\Oh(k\log^2 n)$ update time.
\begin{theorem}\label{thm:fastQ}
There exists a Las-Vegas randomized algorithm for the dynamic $k$-mismatch problem satisfying $Q(n,k)=\bigO{\log \log n}$ and $U(n,k)=\bigO{k \log^2 n}$ with high probability.
\end{theorem}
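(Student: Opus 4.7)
My plan is to deamortize \cref{prp:amort} by running two parallel copies of the amortized data structure with staggered cycles of length $2k$, so that the expensive $\Oh(k^2\log^2 n)$ epoch setup is spread evenly across $k$ updates. Each copy alternates between a \emph{build} phase and a \emph{serve} phase of length $k$ each, with the two copies offset by $k$ updates so that at any moment one copy is in build while the other is in serve; every query is routed to the serve copy (cost $\Oh(\log\log n)$), and every update is applied to both copies.

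At the start of its build phase, a copy records snapshot handles to the current $P$ and $T$; because the dynamic-strings data structure of~\cite{GKKLS} underlying \cref{cor:pillar_impl} only ever inserts new strings into its collection, those snapshot handles remain valid throughout the whole build phase even as subsequent updates create new handles for the current $P$ and $T$. Over the first $\lceil k/2\rceil$ updates, we execute the algorithm of \cref{lem:pillar_k} on the snapshot with a slightly inflated threshold of $3k$, chosen so that a current Hamming distance of at most $k$ always implies a snapshot distance of at most $3k$ (the snapshot and any moment during the subsequent serve phase differ by at most $2k$ substitutions). Since that preprocessing uses $\Oh(k^2)$ self-contained \modelname operations plus $\Oh(k^2\log\log k)$ additional time, it can be split at \modelname-operation boundaries into $\Oh(k)$-sized chunks, each costing $\Oh(k\log^2 n)$ time per update step; substitutions arriving during these steps are merely appended to a queue.

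For the remaining $\lfloor k/2\rfloor$ updates of the build phase, the output of \cref{lem:pillar_k}---either the $\Oh(k)$ snapshot $3k$-mismatch positions with their Hamming distances, or the periodic structure $(Q, T', M(P,Q^*), M(T',Q^*))$ of \cref{thm:char,prp:charmult}---is available. We then drain the queue at a rate of two logged substitutions per actual update using the $\Oh(k+\log n)$-time update procedure of \cref{prp:amort}. Since at most $k$ substitutions are ever pending or will arrive during this sub-phase, the queue is empty at the end and the copy is fully synchronized with the current $P$ and $T$, ready to take over serving. Per actual update, the total work is $\Oh(k\log^2 n)$ for the copy in build plus $\Oh(k+\log n)$ for the copy in serve, giving $U(n,k)=\Oh(k\log^2 n)$ worst-case; each query is handled by the serving copy in $\Oh(\log\log n)$ time. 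A brief bootstrap is needed before either copy has finished its first build: for the initial $k$ updates we fall back on the data structure of \cref{thm:fastU}, which already supplies $\Oh(\log n)$ updates and $\Oh(k\log n)$ queries.

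The main technical subtlety I expect to work out carefully is showing that, after the queue-draining sub-phase of build, the substructures maintained by \cref{prp:amort} (the list of distances $d_i$ in the ``few occurrences'' branch, and the sets $M(P,Q^*)$, $M(T',Q^*)$ together with the values $\mu_j$ in the ``periodic'' branch) agree with those that a freshly initialised copy would produce on the current $P$ and $T$. Because the update rules of \cref{prp:amort} depend only on the snapshot-time relevant positions and on the old and new characters of each substitution, this reduces to a routine induction on the number of logged updates applied; the inflation of the threshold from $2k$ to $3k$ is precisely what ensures that no position which becomes a $k$-mismatch during the serve phase was discarded by the snapshot preprocessing.
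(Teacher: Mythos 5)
Your proposal is correct and is essentially the paper's own argument: two staggered instances of \cref{prp:amort} with lazy rebuilding, the expensive recomputation spread over half a phase and the buffered updates drained at double rate during the other half, with queries always routed to the up-to-date copy. The only deviations are parameter choices (you use build/serve phases of length $k$ with threshold $3k$, while the paper uses epochs of length $\frac12 k$ and keeps the threshold at $2k$; both are valid) and your bootstrap via \cref{thm:fastU}, which gives $\Oh(k\log n)$-time queries for the first $k$ operations and so technically violates the stated $\Oh(\log\log n)$ query bound---running the first build during initialization avoids this.
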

\begin{proof}
We maintain two instances of the algorithm of \cref{prp:amort}, with updates forwarded to both instances,
but queries forwarded to a single instance that is currently \emph{active}.

The algorithm logically partitions its runtime into epochs, with $\frac12k$ updates in each epoch.
For the two instances, the time-consuming updates are chosen to be the first updates of every even and odd epoch,
respectively. Once an instance has to perform a time-consuming update, it becomes inactive (it buffers the subsequent updates and cannot be used for answering queries) and stays inactive for the duration of the epoch.
The work needed to perform the time-consuming update is spread across the time allowance for the first half of the epoch, with the time allowance for the second half of the epoch used in order to clear the accumulated backlog of updates (by processing updates at a doubled rate). 
During this epoch, the other (active) instance processes updates and queries as they arrive in $\Oh(k \log^2 n)$ 
and $\Oh(\log \log n)$ worst-case time, respectively.
\end{proof}

\subsection{Trade-off between Update Time and Query Time}
The next natural question is the existence of a trade-off between the run-times of~\cref{thm:fastQ,thm:fastU}.
Due to \Cref{thm:3sum_lower} (in \cref{sec:lb}), the answer is likely negative for $k\ll \sqrt{n}$.
Nevertheless, for $k\gg \sqrt{n}$, the trade-off presented below simultaneously achieves $Q(n,k),U(n,k)=k^{1-\Omega(1)}$.

We first recall some combinatorial properties originating from previous work on the $k$-mismatch problem~\cite{chan2020approximating,SODA:CFPST16,CKP,GKKP}.
The description below mostly follows~\cite[Section 3]{GKKP}.

\begin{definition}[\cite{SODA:CFPST16}]
Let $X$ be a string and let $d$ be a non-negative integer.
A positive integer $\rho\le |X|$ is a \emph{$d$-period} of $X$ if $\HD(X[\rho \dd |X|), X[0\dd |X| - \rho)) \le d$. 
\end{definition}

Recall that $\Occ_k(P,T)=\{i: \HD(P,T[i\dd i+m))\le k\}$ for a pattern $P$ and text $T$.
\begin{lemma}[\cite{SODA:CFPST16}]\label{lem:aperiodic}
If $i,i'\in \Occ_k(P,T)$ are distinct, then $\rho:=|i'-i|$ is a $2k$-period of $P$.
Moreover, if $n\le 2m$, then $\rho$ is a $(8k+\rho)$-period of $T[\min \Occ_k(P)\dd m+\max\Occ_k(P))$.
\end{lemma}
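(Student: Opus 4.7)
The plan has two parts, corresponding to the two claims of the lemma.

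For the first claim ($\rho$ is a $2k$-period of $P$), I would use a direct triangle-inequality argument. WLOG take $i<i'$ and set $\rho=i'-i$; both $\HD(P,T[i\dd i+m))$ and $\HD(P,T[i'\dd i'+m))$ are at most $k$. For every $j\in[0\dd m-\rho)$, I would apply
\[[P[j]\neq P[j+\rho]]\le [P[j]\neq T[i'+j]]+[T[i+j+\rho]\neq P[j+\rho]],\]
using the identity $T[i+j+\rho]=T[i'+j]$. Summing, the two terms on the right count mismatches of the two occurrences, each at most $k$, so $\HD(P[0\dd m-\rho),P[\rho\dd m))\le 2k$.

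For the second claim, let $\ell=\min\Occ_k(P,T)$ and $r=m+\max\Occ_k(P,T)$, so $T'=T[\ell\dd r)$. The hypothesis $n\le 2m$ gives $(r-m)-\ell\le n-m\le m$, so the two extreme occurrences overlap in $T$. The target is to bound $\sum_{q\in[\ell\dd r-\rho)}[T[q]\neq T[q+\rho]]$ by $8k+\rho$. I would again apply a pointwise triangle inequality
\[[T[q]\neq T[q+\rho]]\le [T[q]\neq P_q]+[P_q\neq P_q']+[P_q'\neq T[q+\rho]],\]
where the surrogates are read off from the ``closer'' extreme occurrence: $P_q:=P[q-\ell]$ when $q<r-m$ and $P_q:=P[q-(r-m)]$ otherwise, with $P_q'$ defined analogously according to whether $q+\rho<r-m$.

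Writing $M_i:=\{j\in[i\dd i+m):T[j]\neq P[j-i]\}$, so $|M_\ell|,|M_{r-m}|\le k$, the first and third sums each contribute at most $2k$: splitting the range at $q=r-m$ (respectively at $q+\rho=r-m$) makes each sub-range count a subset of $M_\ell$ or $M_{r-m}$. The middle sum splits into three sub-intervals of $[\ell\dd r-\rho)$: on $[\ell\dd r-m-\rho)$ both $P_q,P_q'$ use occurrence $\ell$, and the sum collapses to $\sum_a[P[a]\neq P[a+\rho]]\le 2k$ by Part~1; on $[r-m\dd r-\rho)$ both use occurrence $r-m$ and the same $2k$ bound applies; on the transition interval $[r-m-\rho\dd r-m)$ of length exactly $\rho$, the two surrogates compare $P$-positions separated by the ``uncontrolled'' shift $(r-m)-\ell-\rho$, and I would only use the trivial bound $\rho$. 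This last sub-interval is the main obstacle: there is no $d$-period information directly available for that shift, and it is precisely this length-$\rho$ window that produces the additive $\rho$ in the $(8k+\rho)$-period statement. Adding everything yields $2k+2k+(2k+\rho+2k)=8k+\rho$, as required.
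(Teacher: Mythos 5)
The paper states this lemma as a citation to~\cite{SODA:CFPST16} and gives no proof of its own, so there is nothing internal to compare against; your argument is correct and is essentially the standard derivation, with the triangle inequality giving the $2k$-period of $P$ and the covering of $T'$ by the two extreme occurrences (which overlap because $n\le 2m$) yielding $2k+2k$ for the text-to-surrogate terms, $2k+2k$ for the two pattern-period ranges, and the unavoidable $+\rho$ from the length-$\rho$ transition window. All index-range checks in your decomposition go through, so the proposal stands as a valid self-contained proof of the cited fact.
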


Recall that the $L_0$-norm of a function $f:\Z\to \Z$ defined as $\|f\|_0=|\{x : f(x)\ne 0\}|$.
The \emph{convolution} of two functions $f,g:\Z \to \Z$ with finite $L_0$-norms is a function $f*g: \Z \to \Z$ such that \[[f*g](i)=\sum_{j\in \Z} f(j)\cdot g(i-j).\]
For a string $X$ over $\Sigma$ and a symbol $c\in \Sigma$, the \emph{characteristic function} of $X$ and $c$
is $X_c : \Z \to \{0,1\}$ such that $X_c(i)=1$ if and only if $X[i]=c$.
For a string $X$, let $X^R$ denote $X$ reversed.
The \emph{cross-correlation} of strings $X$ and $Y$ over $\Sigma$ is a function $X\otimes Y:\mathbb{Z}\to \mathbb{Z}$
such that  \[X\otimes Y = \sum_{c\in \Sigma} X_c * Y_c^R.\]

\begin{fact}[{\cite[Fact 7.1]{CKP}}]\label{fct:cross}
For $i\in [m-1\dd n)$, we have $[T\otimes P](i)=|P|-\HD(P, T(i-m\dd i])$.
For $i<0$ and for $i\ge m+n$, we have $[T\otimes P](i)=0$.
\end{fact}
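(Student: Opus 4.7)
The plan is to unfold the definition of cross-correlation and interpret the resulting quantity combinatorially as a count of matched positions. I would begin by expanding
\[[T\otimes P](i) = \sum_{c\in \Sigma} [T_c * P_c^R](i) = \sum_{c\in \Sigma} \sum_{j\in \Z} T_c(j)\cdot P_c^R(i-j),\]
and noting that $T_c(j)\cdot P_c^R(i-j) = 1$ precisely when $j\in[0\dd n)$, $i-j\in[0\dd m)$, and $T[j]=c=P[m-1-(i-j)]$; otherwise the product is $0$. Summing over $c\in\Sigma$ collapses the character constraint into the single indicator $\mathbf{1}[T[j]=P[m-1-i+j]]$ on the same range of $j$.

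The next step is the substitution $\ell := j-(i-m+1)$, which maps $j\in[i-m+1\dd i]$ bijectively to $\ell\in[0\dd m)$ and rewrites the matching condition as $T[i-m+1+\ell]=P[\ell]$. When $i\in[m-1\dd n)$, the interval $[i-m+1\dd i]$ lies entirely inside $[0\dd n)$, so no truncation is required, and the sum counts exactly the positions where $P$ agrees with the length-$m$ fragment $T(i-m\dd i]=T[i-m+1\dd i+1)$. This yields $[T\otimes P](i)=|P|-\HD(P,T(i-m\dd i])$, as claimed.

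For the two boundary cases I would just observe that $T_c$ is supported on $[0\dd n)$ and $P_c^R$ on $[0\dd m)$, so $T_c*P_c^R$ is supported on $[0\dd n+m-1)$; consequently every summand vanishes whenever $i<0$ or $i\ge n+m-1$ (and in particular for $i\ge n+m$), giving $[T\otimes P](i)=0$.

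The argument is essentially index bookkeeping, so I do not anticipate a substantive obstacle. The one place where care is warranted is the interplay between the convolution shift, the reversal defining $P^R$, and the half-open bracket convention in $T(i-m\dd i]$; as a sanity check I would verify the formula at the endpoints $i=m-1$ and $i=n-1$, where the fragment becomes $T[0\dd m)$ and $T[n-m\dd n)$ respectively, to confirm that the off-by-ones line up.
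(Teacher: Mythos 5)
Your computation is correct: the index bookkeeping checks out (with $P_c^R$ parsed as $(P^R)_c$, the substitution $\ell=j-(i-m+1)$ turns the matching condition into $T[i-m+1+\ell]=P[\ell]$ for $\ell\in[0\dd m)$, and the support bound $[0\dd n+m-1)$ handles both boundary cases). The paper itself gives no proof of this statement---it is imported verbatim as Fact~7.1 of~\cite{CKP}---so your direct unfolding of the definitions is exactly the standard self-contained verification one would expect, and there is nothing to compare against beyond the citation.
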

By \cref{fct:cross}, $[T\otimes P](i+m-1)$ suffices to compute $\HD(P,T[i\dd i+m))$ for $i\in [0\dd n-m]$.
The \emph{backward difference} of a function $f:\Z\to \Z$ due to $\rho\in \Z_+$ is $\Delta_\rho[f](i)=f(i)-f(i-p)$.

\begin{observation}[{\cite[Observation~7.2]{CKP}}]\label{obs:supp}
    If a string $X$ has a $d$-period $\rho$, then \[\sum_{c\in \Sigma} \|\Delta_\rho[X_c]\|_0 \le 2(d+\rho).\]
\end{observation}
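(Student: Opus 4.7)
\textbf{Proof plan for \cref{obs:supp}.} The plan is to expand the left-hand side as a double sum over indices and characters, and then classify each index $i\in\Z$ according to whether both $i$ and $i-\rho$ lie inside $[0\dd|X|)$. Concretely, swapping the order of summation gives
\[
\sum_{c\in\Sigma}\|\Delta_\rho[X_c]\|_0 \;=\; \sum_{i\in\Z}\, \bigl|\{c\in\Sigma : X_c(i)\ne X_c(i-\rho)\}\bigr|,
\]
so it suffices to bound, for each $i$, how many characters $c$ make the inner indicator nonzero.

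The key observation is that at most two characters can ever contribute at any single index: the value $X_c(i)$ is $1$ only for $c=X[i]$ (if $i$ is in range) and likewise for $X_c(i-\rho)$, so for each $i$ we get a contribution of $0$ or $2$ (when both sides are in range and differ), or $0$ or $1$ (when exactly one of the two sides is in range). The plan is then to case-split on $i$:
\begin{enumerate}
    \item For $i\in[\rho\dd|X|)$, both $X[i]$ and $X[i-\rho]$ are defined, and the contribution is $2$ precisely when $X[i]\ne X[i-\rho]$. By the definition of a $d$-period, the number of such $i$ is at most $\HD(X[\rho\dd|X|),X[0\dd|X|-\rho))\le d$, contributing at most $2d$ in total.
    \item For $i\in[0\dd\rho)$, only $X[i]$ is defined, so exactly one character $c=X[i]$ contributes $1$, giving $\rho$ in total.
    \item For $i\in[|X|\dd|X|+\rho)$, only $X[i-\rho]$ is defined, and by the same reasoning this range contributes $\rho$ in total.
    \item All other $i$ contribute $0$ since both $X_c(i)$ and $X_c(i-\rho)$ vanish.
\end{enumerate}
Summing the three nonzero contributions yields $2d + \rho + \rho = 2(d+\rho)$, which is the claimed bound.

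I expect no real obstacle here: the only mild subtlety is remembering that $X_c$ is defined on all of $\Z$ (with $X_c(i)=0$ outside $[0\dd|X|)$), so the boundary ranges $[0\dd\rho)$ and $[|X|\dd|X|+\rho)$ must be accounted for separately; otherwise one would only obtain the weaker bound $2d$. Once that is handled, the argument is purely a bookkeeping exercise reducing to the definition of a $d$-period.
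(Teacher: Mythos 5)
Your argument is correct and complete: the exchange of summation, the observation that each index contributes $0$, $1$, or $2$ depending on whether $X[i]$ and $X[i-\rho]$ are both defined, and the identification of the middle range with the Hamming distance in the definition of a $d$-period together give exactly $2d+2\rho$. The paper itself does not prove this statement---it cites it as Observation~7.2 of~\cite{CKP}---and your derivation is the standard, essentially unique elementary proof of that fact, so there is nothing to reconcile.
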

Our computation of $T\otimes P$ is based on the following lemma:
\begin{lemma}[See {\cite[Lemma 6]{GKKP}}]\label{lem:second}
    For every pattern $P$, text $T$, and positive integer $\rho$, we have
    $\Delta_{\rho}[\Delta_{\rho}[T\otimes P]] = \sum_{c\in \Sigma} \Delta_{\rho}[T_c]*\Delta_{\rho}[P_c^R]$.
Consequently, for every $i\in \Z$,
\[[T\otimes P](i) = \sum_{j=0}^\infty (j+1)\cdot \left[\sum_{c\in \Sigma}\Delta_{\rho}[T_c]*\Delta_{\rho}[P_c^R]\right](i-j\rho).\]
\end{lemma}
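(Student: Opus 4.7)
The plan is to establish both identities by exploiting two simple facts: backward differences commute with convolution, and a double backward difference can be explicitly inverted on functions with bounded support.

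For the first identity, I would begin by rewriting the backward difference as a convolution. Let $\delta_r : \Z \to \Z$ denote the indicator of $\{r\}$. Then $\Delta_\rho[f] = (\delta_0 - \delta_\rho) * f$ for every $f : \Z\to \Z$ with finite $L_0$-norm. Since convolution is associative and commutative (on such functions), this immediately yields $\Delta_\rho[f*g] = (\Delta_\rho[f]) * g = f * (\Delta_\rho[g])$. Applying this fact twice to the characteristic-function decomposition $T\otimes P = \sum_{c\in\Sigma} T_c * P_c^R$, and using linearity of $\Delta_\rho$, gives
\[\Delta_\rho[\Delta_\rho[T\otimes P]] = \sum_{c\in \Sigma} \Delta_\rho[\Delta_\rho[T_c * P_c^R]] = \sum_{c\in \Sigma} \Delta_\rho[T_c] * \Delta_\rho[P_c^R],\]
which is the claimed identity.

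For the second identity, I would invert $\Delta_\rho^2$. Writing $g = T\otimes P$ and $h = \Delta_\rho^2[g]$, observe that $h(i) = g(i) - 2g(i-\rho) + g(i-2\rho)$, and that \cref{fct:cross} guarantees $g(i)=0$ for $i<0$ and for $i \ge m+n$. The candidate inverse is $g(i) = \sum_{j=0}^\infty (j+1)\, h(i-j\rho)$; the sum is finite because $g$ vanishes outside a bounded interval, so only finitely many shifts of $h$ can be nonzero at the points involved. To verify, substitute the definition of $h$ and reindex: the coefficient of $g(i)$ is $a_0 = 1$, the coefficient of $g(i-\rho)$ is $a_1 - 2a_0 = 0$, and for $j\ge 2$ the coefficient of $g(i-j\rho)$ is $a_j - 2a_{j-1} + a_{j-2} = 0$, where $a_j = j+1$. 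Hence the right-hand side telescopes to $g(i)$, completing the proof.

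The only subtlety—really the one step that needs justification—is the convergence and validity of the inversion formula, which is why the boundedness of the support of $T\otimes P$ matters; once that is recorded, everything else is linear algebra over the convolution algebra and a telescoping identity. I expect no real obstacle, as the result is purely formal given the support condition from \cref{fct:cross}.
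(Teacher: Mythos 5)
Your proof is correct. Note that the paper does not prove this lemma itself---it imports it by citation from \cite{GKKP}---so there is no in-paper argument to compare against; your derivation (viewing $\Delta_\rho$ as convolution with $\delta_0-\delta_\rho$ so that it commutes with $*$, then inverting $\Delta_\rho^2$ by the telescoping identity $a_j-2a_{j-1}+a_{j-2}=0$ for $a_j=j+1$, with finiteness of the sum guaranteed by the bounded support of $T\otimes P$ from \cref{fct:cross}) is exactly the standard argument behind the cited result and is complete.
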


\begin{theorem}
There exists a deterministic algorithm for the dynamic $k$-mismatch problem with $U(n,k)=\Oh\Big(\sqrt{\frac{nk}{x}} + \frac{n}{k}\Big)$
and $Q(n,k)=\tO{x}$, where $x$ is a trade-off parameter that can be set in $[1\dd k]$.
\end{theorem}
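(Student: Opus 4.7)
The plan is to run the algorithm in epochs of length $L$ (tuned as a function of $x$ below) and convert the resulting amortized bounds to worst case via the double-buffering trick already used in the proof of \cref{thm:fastQ}. At every epoch boundary I would rebuild from scratch by invoking a deterministic static $k$-mismatch routine---for example,~\cite{chan2020approximating} combined with the analysis underlying \cref{lem:pillar_k}---with inflated threshold $2(k+L)$. The slack ensures that the structure identified at the start of the epoch remains an overapproximation of $\Occ_k(P,T)$ for the whole epoch, because any substitution changes a Hamming distance by at most one. \cref{thm:char} then splits the analysis into two cases. In the \emph{few-occurrences} case, $|\Occ_{2(k+L)}(P,T)|=\Oh(k+L)$, and I would reuse Case A of \cref{prp:amort}: maintain the Hamming distance at each candidate, giving $\Oh(k+L)$ per update and $\Oh(1)$ per query.

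In the \emph{periodic} case we obtain a primitive $Q$ with $\rho := |Q| = \Oh(n/k)$, together with mismatch sets $M_P=M(P,Q^*)$ and $M_T=M(T',Q^*)$ of size $\Oh(k+L)$. By \cref{prp:charmult}, each valid query at position $i\equiv \ell\pmod{\rho}$ reduces to computing $|M_P|+c(i)-\mu_{(i-\ell)/\rho}$, where $c(i) := |M_T\cap[i\dd i+m)|$. I would maintain $M_P$, $M_T$, the current $P$ and $T$, and the backward difference $c(i)-c(i-\rho)$ (which is perturbed at only $\Oh(1)$ positions per edit) inside a Fenwick tree, all accessible in $\Oh(\log n)$. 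The term $\mu_j$ is the bottleneck: at the epoch boundary I would compute a snapshot $\mu^{\mathrm{snap}}$ using the sparse convolution $F = \sum_{c}\Delta_\rho[T_c]*\Delta_\rho[P_c^R]$ from \cref{lem:second}, whose summed support is $\Oh((k+L)^2)$ by \cref{obs:supp}. During the epoch I would buffer every substitution and correct $\mu^{\mathrm{snap}}$ on demand: for a query fixing $j$, each buffered edit at $P[\rho_0]$ (resp.\ $T[\tau]$) contributes an $\Oh(1)$ correction determined by a lookup of $\rho_0+j\rho$ in $M_T$ (resp.\ $\tau-j\rho$ in $M_P$) against the current state, so the replay costs $\Oh(L)$ per query.

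Choosing $L$ proportional to $k\sqrt{kx/n}$ (clamped to $\le x$) balances the amortized rebuild $\Oh((n+(k+L)^2)/L)$ against the replay $\Oh(L)$, giving $\Oh(\sqrt{nk/x}+n/k)$ per update and $\tO{x}$ per query whenever $k\ge\sqrt{n}$; the complementary regime $x\le n/k$ is absorbed by Case A, whose $\Oh(k)$ update bound is dominated by $\Oh(\sqrt{nk/x}+n/k)$ there. The spreading trick of \cref{thm:fastQ} then promotes the amortized bound to worst case. The main obstacle is keeping each replayed edit at $\Oh(1)$ despite the fact that a single substitution can in principle perturb $\mu$ at $\Omega(k)$ indices: the saving is that a query fixes a single $j$, restricting the per-edit work to a constant-size pairing check; making this pairing check meaningful in the presence of later edits to the opposite string requires maintaining $M_P$, $M_T$, $P$, and $T$ eagerly so that the contribution of each buffered edit can always be evaluated against the current state.
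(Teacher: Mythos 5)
Your high-level architecture (lazy rebuilding, an occurrence-count dichotomy, periodic structure plus difference/convolution machinery, and buffered edits replayed at query time) matches the paper's, but two concrete steps fail to deliver the claimed bounds. First, your dichotomy is drawn at the wrong threshold. By invoking \cref{thm:char} you only enter the periodic branch when there are more than $\Theta(k)$ occurrences, so the ``few-occurrences'' branch must handle up to $\Theta(k+L)$ candidates at $\Oh(k)$ per update. That exceeds the target $\Oh(\sqrt{nk/x}+n/k)$ whenever $kx>n$ (e.g., $k=x=n^{3/4}$ gives a target of $\Oh(\sqrt{n})$ but a cost of $\Oh(n^{3/4})$), and your claim that the bad regime is ``absorbed'' concerns $x\le n/k$, which is not the complementary regime. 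The paper instead splits at $|O|\le n/k$ versus $|O|>n/k$ and, in the dense case, takes $\rho$ to be the gap between two closest occurrences; \cref{lem:aperiodic} makes $\rho\le k$ an approximate period without any primitivity or $\HD(P,Q^*)<2k$ guarantee, which is all that \cref{obs:supp} and \cref{lem:second} need. Your reliance on \cref{prp:charmult} with a primitive $Q$ of length $\Oh(n/k)$ leaves the intermediate regime $n/k<|O|\le 864k$ uncovered by the efficient branch.

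Second, your single epoch length $L\le x$ cannot simultaneously amortize the rebuild and bound the query replay. The static $2k$-mismatch rebuild costs $\Omega(n)$, so amortizing it over $L\le x$ updates already gives $\Omega(n/x)$ per update, which exceeds both $n/k$ and $\sqrt{nk/x}$ whenever $kx<n$; likewise your $\Oh((k+L)^2)$ snapshot of the sparse convolution amortizes to $\Omega(k^2/x)$, which exceeds the target whenever $k^3>nx$. Plugging in your own $L=k\sqrt{kx/n}$ makes the $n/L$ term blow past the bound for all $k<\sqrt{n}$, while the theorem is claimed for every $k$ and $x\in[1\dd k]$. The paper's fix is a \emph{two-level} scheme: full rebuilds once per $k$ updates (yielding $n/k+\sqrt{n}$ amortized), and, nested inside, subepochs of $x$ updates during which edits to the difference functions are buffered and then the per-letter convolutions are re-materialized using a heavy/light threshold $t=\sqrt{nk/x}$ --- heavy letters recomputed by FFT at $\tO{n}$ each (there are $\tO{k/t}$ of them), light letters patched incrementally at $\Oh(t)$ per edit --- giving $\tO{\sqrt{nkx}}$ per subepoch, i.e., $\tO{\sqrt{nk/x}}$ per update. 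This heavy/light balancing is the ingredient that actually produces the $\sqrt{nk/x}$ term, and it is absent from your argument. Your query-time replay of $\mu_j$ (a constant-size pairing check per buffered edit once $j$ is fixed) is sound, but it sits on top of an update-side accounting that does not close.
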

\begin{proof} 
We solve the problem using a lazy rebuilding scheme similar to that in the proof of \cref{thm:fastQ}.
Hence, we can afford update time $\Ohtilde\Big(n + k\sqrt{n}\Big)$ every $k$ updates.
Thus, if an incoming update marks the beginning of a new epoch (lasting for $k$ updates),
we run a (static) $2k$-mismatch algorithm~\cite{chan2020approximating,gawrychowski2018towards}, resulting in $O:=\Occ_{2k}(P,T)$ and the Hamming distances $d_i = \HD(P,T[i\dd i+m))$ for each $i\in O$. This takes $\tO{n + k\sqrt{n}}$ time.
As in the proof of \cref{prp:amort}, since $\Occ_k(P,Q)\sub O$ holds for the duration of the epoch,
we can safely return $\infty$ for $\Query(i)$ with $i\notin O$.
We distinguish two cases.
\begin{description}
\item[$|O|\le \frac{n}{k}$:] 
We maintain the distances $d_i$ for $i\in O$. As noted above, $\Occ_{k}(P,T)\sub O$ even after $k$ updates.
We now observe that any update requires only updating the mismatches for every element of $O$, with $\bigO{1}$ cost per element and $\bigO{\frac{n}{k}}$ total; the queries are handled by finding the answer stored for  $i\in O$, at $\tO{1}$ cost.
\item[$|O|> \frac{n}{k}$:] We set $\rho$ to be the distance between two closest elements of $O$;
we have $\rho\le k$  due to $|O|> \frac{n}{k}$. By \cref{lem:aperiodic}, $\rho$ is a $4k$-period of $P$ and a $17k$-period of $T':=T[\min O \dd m+\max O)$. Moreover, $\Occ_{k}(P,T)\sub O \sub [\min O\dd \max O]$
holds for the duration of the epoch, so all $k$-mismatch occurrences of $P$ in $T$ remain contained in $T'$.
\end{description}

We have thus reduced our problem to answering queries and performing updates for $P$ and $T'$. 
Moreover, we have a positive integer $\rho\le k$ which is initially a $4k$-period of $P$ and a $17k$-period of $T'$,
and, after $k$ updates, it remains a $6k$-period of $P$ and $19k$-period of $T'$.
Let us define the \emph{weight} of $c\in \Sigma$ as $\|\Delta_{\rho}[P^R_c]\|_0 + \|\Delta_{\rho}[T'_c]\|_0$;
by \cref{obs:supp}, the total weight across $c\in \Sigma$ remains $\Oh(k)$.

We proceed as follows. We maintain $\Delta_{\rho}(P^R_c) * \Delta_{\rho}(T'_c)$ for each letter $c\in \Sigma$ separately, and the sum $\Delta_{\rho}[\Delta_{\rho}[T\otimes P]]=\sum_{c \in \Sigma} \Delta_{\rho}(P^R_c) * \Delta_{\rho}(T'_c)$. 
For each remainder  $i\bmod \rho$, the values $\Delta_{\rho}[\Delta_{\rho}[T\otimes P]](i)$ are stored in a data structure that allows queries for prefix sums (both unweighted and weighted by $\floor{i/\rho}$)
so that $[T\otimes P](i)$ can be retrieved efficiently using \cref{lem:second}.
Every update to $P$ or $T$ incurs updates to $\Delta_{\rho}(P^R_c)$ or $\Delta_{\rho}(T'_c)$, in $\bigO{1}$ places in total (two for each letter involved in the substitution). 
We buffer the updates to those convolutions of (potentially) sparse functions during subepochs of $x$ updates, and then we recompute values of $\Delta_{\rho}(P^R_c) * \Delta_{\rho}(T'_c)$ amortized during the next $x$ updates. We fix a threshold value $t$ (specified later), and iterate through letters $c \in \Sigma$.
\begin{itemize}
\item If a letter $c$ had weight at least $t$ or accumulated at least $t$ updates, we recompute the corresponding convolution from scratch, at the cost of $\tO{n}$ time per each such \emph{heavy} letter.
\item Otherwise, updates are processed one by one, at the cost of $\tO{t}$ time per update.
\end{itemize}
There are $\bigO{\frac{k+x}{t}}$ heavy letters, which is $\tO{\frac{k}{t}}$ since $x\le k$. Thus, the total cost $\tO{\frac{nk}{t}  + x t}$ is minimized when $t = \sqrt{\frac{nk}{x}}$ and gives $\tO{\sqrt{nkx}}$ time per subepoch, or $\Ohtilde\Big(\sqrt{\frac{nk}{x}}\Big)$ time per update.

To perform queries, we retrieve $[T\otimes P](i+m-1)$ using \cref{lem:second} and the data structure maintaining 
$\Delta_{\rho}[\Delta_{\rho}[T\otimes P]]$ to recover the number of matches last time we stored the convolutions.
Next, we scan through the list of at most $2x$ updates to potentially update the answer.
\end{proof}

To put the trade-off complexity in context, we note that e.g., when $k = m$, it is possible to achieve $U(n,k), Q(n,k) = \tO{n^{2/3}}$. This improves over $\tO{n^{3/4}}$ presented in~\cite{clifford2018upper}.

\section{Lower Bounds}\label{sec:lb}
In this section, we give conditional lower bounds for the dynamic
$k$-mismatch problem based on the 3SUM conjecture~\cite{STOC:Patrascu10}.
For the 3SUM problem, we use the following definition.

\newcommand{\G}{\mathbf{G}}
\begin{definition}[3SUM Problem]
Given three sets $A,B,C\sub [-N\dd N)$ of total size $\size{A}+\size{B}+\size{C}=n$,
decide whether there exist $a\in A$, $b\in B$, $c\in C$ such that $a+b+c=0$.
\end{definition}

Henceforth, we consider algorithms for the word RAM model with $w$-bit machine words,
where $w=\Omega(\log N)$.
In this model, there is a simple $\bigO{n^2}$-time solution for the 3SUM problem. This can be improved by log factors~\cite{Baran2008}, with the current record being $\bigO{(n^2/\log^{2} n)(\log \log n)^{\Oh(1)}}$ time~\cite{chan2019more}. 
\newcommand{\eps}{\varepsilon}
\begin{conjecture}[3SUM Conjecture]
For every constant $\eps > 0$, there is no Las-Vegas randomized algorithm
solving the 3SUM problem in $\Oh(n^{2-\eps})$ expected time.
\end{conjecture}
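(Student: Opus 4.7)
The final statement is a conjecture rather than a theorem, so there is no proof to reconstruct: the 3SUM Conjecture is a central open problem in fine-grained complexity, stated here as the hardness hypothesis underpinning the reductions of \cref{sec:lb}. In this proposal I therefore describe what a genuine proof would have to accomplish, what partial results exist, and why the word-RAM formulation places the statement well beyond current techniques.

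The plan would be to fix some $\eps>0$ and produce an adversarial distribution $\mathcal{D}$ over triples $(A,B,C)\sub[-N\dd N)$ with $|A|+|B|+|C|=n$ together with a lower bound argument showing that any Las-Vegas algorithm that is correct on inputs drawn from $\mathcal{D}$ must, in expectation, read and combine $\Omega(n^{2-\eps})$ machine words. A natural starting point is the line of work of Gr{\o}nlund and Pettie and of Kane, Lovett, and Moran, which obtains bounds of roughly $\Omega(n^{3/2}/\polylog{n})$ in the linear or $k$-linear decision-tree model, by carefully analyzing the sign patterns induced by the $\Oh(n^2)$ linear forms $a+b+c$. One would then try to lift such a combinatorial bound to the word RAM either via a direct cell-probe argument \`a la P\u{a}tra\c{s}cu--Demaine or via a reduction showing that any subquadratic word-RAM algorithm can be simulated in the decision-tree model, contradicting the lower bound.

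The main obstacle, which is exactly what has kept the conjecture open, is that none of these avenues survive contact with the full word-RAM model. Hashing, bit packing, and $\Theta(\log N)$-bit arithmetic allow algorithms such as Baran--Demaine--P\u{a}tra\c{s}cu and Chan to achieve $\Oh(n^2/\log^2 n \cdot(\log\log n)^{\Oh(1)})$ time, which already defeats any argument that counts only comparisons or linear queries. A putative proof would have to rule out, in a uniform way, all hashing-based, FFT-based, and structured-matrix-product speedups, and in particular would imply new unconditional lower bounds against the word RAM --- a regime where essentially no super-linear lower bounds are known for explicit problems. Consequently I do not expect to produce an unconditional proof; the realistic role of the statement is as an axiom, and the substantive content of \cref{sec:lb} lies in the fine-grained reductions \emph{from} 3SUM \emph{to} the dynamic $k$-mismatch problem, not in establishing the conjecture itself.
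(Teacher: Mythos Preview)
Your assessment is correct: the statement is a conjecture, the paper gives no proof of it, and it is used solely as a hardness hypothesis for the reductions in \cref{sec:lb}. There is nothing to compare here beyond noting that both you and the paper treat it as an unproved assumption.
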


As a first step, we note that the 3SUM problem remains hard even if we allow for polynomial-time preprocessing 
of $A$. The following reduction is based on~\cite[Theorem 13]{LWWW16}.
\begin{lemma}\label{lem:prepr}
Suppose that, for some constants $d\ge 2$ and $\eps>0$, there exists an algorithm that, after preprocessing integers $n,N\in \Z_+$ and a set $A\sub [-N\dd N)$ in $\Oh(n^d)$ expected time, given sets $B,C\sub [-N\dd N)$ of total size $|A|+|B|+|C|\le n$, solves the underlying instance of the 3SUM problem in expected $\Oh(n^{2-\eps})$ time. Then, the 3SUM conjecture fails.
\end{lemma}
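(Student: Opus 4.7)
The plan is to contradict the 3SUM conjecture by combining the hypothesized preprocessing-query scheme with a standard hashing-based self-reduction that splits a 3SUM input $(A,B,C)$ of total size $n$ into many small 3SUM subinstances; each such subinstance will then be solved by a single invocation of the preprocessed algorithm.

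Fix a parameter $s\in [1,n]$ to be optimized. Sample a random almost-linear hash function $h\colon \Z\to [R]$ with $R:=\lceil n/s\rceil$ from a standard family (as in \cite{Baran2008}) guaranteeing two properties: (i) \emph{almost linearity} --- for all integers $a,b,c$, the value $(h(a)+h(b)+h(c)-h(a+b+c))\bmod R$ lies in a fixed constant-size set $\Delta\subset\Z$; and (ii) \emph{near-uniform load} --- with high probability over $h$, every bucket $h^{-1}(i)\cap(A\cup B\cup C)$ has size $\Oh(s\log n)$. Partition each of $A,B,C$ into $R$ buckets according to $h$. Any witness $a+b+c=0$ with $a\in A_i$, $b\in B_j$, $c\in C_k$ satisfies $(i+j+k)\bmod R\in\Delta$, so only $\Oh(R^2)$ bucket triples $(i,j,k)$ need be examined.

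Now apply the hypothesis: for each $i\in[R]$, preprocess the bucket $A_i$ with size parameter $\bar n=\Theta(s\log n)$ in expected $\Oh(\bar n^{\,d})=\Ohtilde(s^d)$ time, for total preprocessing $\Ohtilde(R\cdot s^d)=\Ohtilde(n\, s^{d-1})$. Then, for each relevant triple $(i,j,k)$, call the query algorithm on the preprocessed $A_i$ with input $(B_j,C_k)$ in expected $\Ohtilde(s^{2-\eps})$ time, accumulating $\Ohtilde(R^2 s^{2-\eps})=\Ohtilde(n^2/s^\eps)$ over all triples. Output \textsf{yes} iff some query does. Rare bad events (oversized buckets, or the near-uniformity condition failing for a given $h$) are detected and handled by constantly many independent rehashings plus a trivial $\Oh(\bar n^{\,2})$ fallback on any offending bucket; these contribute only lower-order terms to the expected runtime while preserving Las-Vegas correctness.

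Setting $s:=n^{1/(d-1+\eps)}$ balances the two costs and yields total expected runtime $\Ohtilde(n^{(2d-2+\eps)/(d-1+\eps)})=\Ohtilde(n^{2-\eps/(d-1+\eps)})$, which is $\Oh(n^{2-\eps'})$ for $\eps':=\tfrac{\eps}{d-1+\eps}>0$, contradicting the 3SUM conjecture. The main obstacle is really just to confirm the existence of a hash family realizing both (i) and (ii) in the word RAM model with $w=\Omega(\log N)$; this is classical and inherited from the argument of \cite{LWWW16}. Everything else --- enumerating the bucket triples, accounting for preprocessing, and rebalancing $s$ --- is a routine cost calculation, and the hypothesis $d\ge 2$ only enters to make preprocessing the binding constraint (for $d=1$ the conclusion already follows without any partition by invoking the query once on the full input).
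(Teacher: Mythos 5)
Your strategy is viable but genuinely different from the paper's. The paper's proof is deterministic: it sorts each of $A,B,C$ into $g=\lfloor n^{(d-1.5)/(d-1)}\rfloor$ contiguous value-intervals, observes that the set $T$ of bucket triples that can contain a witness is an antichain in the coordinatewise domination order on $[1\dd g]^3$ and hence has size $\Oh(g^2)$, and then preprocesses each $A_i$ and queries each triple in $T$. You instead use randomized almost-linear hashing to cut the number of relevant bucket triples to $\Oh(R^2)$ --- which is essentially the technique the paper saves for \cref{thm:hsu}, where hashing is genuinely needed to shrink the universe. For this lemma the universe plays no role, so the deterministic interval decomposition buys a cleaner, randomness-free reduction; your hashing route buys nothing extra here, but your exponent $2-\eps/(d-1+\eps)$ and the paper's $2-\eps/(2(d-1))$ are both subquadratic, and either choice of bucket size can be rebalanced, so the quantitative difference is immaterial.

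There is, however, one step that does not hold as stated. The standard almost-linear families (multiplicative hashing as in \cite{Baran2008}) are only universal; they do \emph{not} guarantee that every bucket has load $\Oh(s\log n)$ with high probability, and for some inputs no $\Oh(1)$ number of rehashings will produce a hash with all buckets small. The guarantee actually available is weaker: the expected number of elements landing in buckets of load exceeding $3n/R$ is $\Oh(R)$. The correct repair is the standard one: truncate each overfull bucket to $\Oh(s)$ elements before preprocessing/querying, and handle each of the expectedly $\Oh(R)$ overflowing elements individually by an $\Oh(n)$-time scan against the other two sets, contributing $\Oh(nR)=\Oh(n^{2-1/(d-1+\eps)})$ expected time, which is within budget since $s=n^{\Omega(1)}$. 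With that substitution your argument goes through; a per-bucket ``$\Oh(\bar n^2)$ fallback'' does not, because an offending bucket participates in $\Omega(R)$ triples and its size is not bounded by $\bar n$ in the first place.
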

\begin{proof}
We shall demonstrate an algorithm solving the 3SUM problem in $\Oh(n^{2-\hat{\eps}})$ time, 
where $\hat{\eps} = \min(\frac12, \frac{\eps}{2(d-1)}) > 0$.
Let $g=\floor{n^{\frac{d-1.5}{d-1}}}$. We construct a decomposition $A=\bigcup_{i=1}^g A_i$ into disjoint subsets such that $|A_i|\le \ceil{\frac{1}{g}|A|}$ and $\max A_i < \min A_{i'}$ hold for $i,i'\in [1\dd g]$ with $i<i'$.  
Similarly, we also decompose $B=\bigcup_{j=1}^g B_j$ and $C=\bigcup_{k=1}^g C_k$.

Next, we construct $T = \{(i,j,k)\in [1\dd g]^3: \min A_i + \min B_j+\min C_k \le 0 \le  \max A_i + \max B_j+\max C_k\}$. Observe that  if $a+b+c=0$ for $(a,b,c)\in A\times B\times C$,
then the triple $(i,j,k)\in [1\dd g]^3$ satisfying $(a,b,c)\in A_i\times B_j\times C_k$ clearly belongs to $T$.
Moreover, $T$ can be constructed in $\Oh(g^2 \log g + |T|)$ time by performing a binary search
over $k\in [1\dd g]$ for all $(i,j)\in [1\dd g]^2$.
To provide a worst-case bound on this running time, we shall prove that $|T|=\Oh(g^2)$.
For this, let us define the \emph{domination} order $\prec$ on $[1\dd g]^3$ so that $(i,j,k)\prec  (i',j',k')$ 
if and only if $i<i'$, $j<j'$, and $k<k'$.
Observe that $T$ is an $\prec$-antichain and that $[1\dd g]^3$ can be covered with $\Oh(g^2)$ $\prec$-chains.
Hence, $|T|=\Oh(g^2)$ holds as claimed.

Let $\hat{n}:=\ceil{\frac{1}{g}|A|}+\ceil{\frac{1}{g}|B|}+\ceil{\frac{1}{g}|C|}=\Oh(\frac{n}{g})$.
We preprocess $(\hat{n},N,A_i)$ for each $i\in [1\dd g]$,
at the cost of $\Oh(g \hat{n}^d) = \Oh(\frac{n^d}{g^{d-1}})=\Oh(\frac{n^d}{n^{d-1.5}})=\Oh(n^{1.5})$ time.
Then, for each triple $(i,j,k)\in T$, we solve the underlying instance of the 3SUM problem,
at the cost of $\Oh(g^2 \hat{n}^{2-\eps})=\Oh(g^2 \frac{n^{2-\eps}}{g^{2-\eps}})
= \Oh(n^{2-\eps} g^{\eps})=\Oh(n^{2-\eps+\eps\frac{d-1.5}{d-1}})
= \Oh(n^{2-\frac{\eps}{2(d-1)}})$ expected time in total. As noted above, it suffices to return YES if and only if at least one of these calls returns YES.\@
\end{proof}

\newcommand{\TSP}{3SUM$^+$}
Our lower bounds also rely on the following variant of the 3SUM problem.
\begin{definition}[\TSP\ Problem]
	Given three sets $A,B,C\sub [-N\dd N)$ of total size $\size{A}+\size{B}+\size{C}=n$,
	report all $c\in C$ such that $a+b+c=0$ for some $a\in A$ and $b\in B$.
\end{definition}

The benefit of using \TSP\ is that it remains hard for $N\ge n^{2+\Omega(1)}$ (as shown in~\cite{HU17});
in comparison, regular 3SUM is known to be hard only for $N\ge n^3$.
The following proposition generalizes the results of~\cite{HU17} (allowing for preprocessing of $A$); its proof relies on the techniques of~\cite{Baran2008}.

\begin{proposition}\label{thm:hsu}
Suppose that, for some constants $d\ge 2$ and $\eps,\delta>0$, there exists an algorithm
that, after $\Oh(n^d)$-time preprocessing of integers $n,N\in \Z_+$, with $N\le n^{2+\delta}$,
and a set $A\sub [-N\dd N)$, given sets $B,C\sub [-N\dd N)$ of total size $|A|+|B|+|C|\le n$,
solves the underlying \TSP\ instance in expected $\Oh(n^{2-\eps})$ time.
Then, the 3SUM conjecture fails.
\end{proposition}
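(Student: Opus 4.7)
The plan is to reduce the 3SUM problem (with preprocessing of $A$, as in \cref{lem:prepr}) to the \TSP\ problem on a universe of size $n^{2+\delta}$ (also with preprocessing of $A$), and then invoke \cref{lem:prepr} to refute the 3SUM conjecture. For the universe reduction I adapt the almost-linear hashing of~\cite{Baran2008} that was used by~\cite{HU17} to prove the hardness of \TSP\ on small universes without preprocessing; the only new twist here is accommodating preprocessing of~$A$.

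Concretely, fix $k$ with $2^k=\Theta(n^{2+\delta})$ and, at preprocessing time, sample a uniformly random odd $w$-bit integer $r$ to define
\[
    h(x) := \floor{(rx \bmod 2^w)\,/\,2^{w-k}} \in [0\dd 2^k).
\]
The essential property, due to~\cite{Baran2008}, is \emph{almost-linearity}: for every $a,b,c$, $h(a)+h(b)+h(c) \equiv h(a+b+c)+\xi \pmod{2^k}$ for some carry $\xi\in\{-2,-1,0,1,2\}$. Preprocessing then computes $A':=h(A)$ and invokes the assumed \TSP\ preprocessing on $(n,2^k,A')$ in time $\Oh(n^d)$, and also stores $A$ in a hash table. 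At query time, for each of the $\Oh(1)$ values of $\xi$ I form $B':=h(B)$ and $C'_\xi:=\{(h(c)+\xi)\bmod 2^k : c\in C\}$ and call the \TSP\ oracle to obtain a set $\tilde C_\xi\sub C$; their union $\tilde C:=\bigcup_\xi \tilde C_\xi$ provably contains the $c$-coordinate of every true 3SUM triple. Each $c\in\tilde C$ is then verified in $\Oh(n)$ time via a single 2SUM scan against the hash table on $A$, and we output YES as soon as a verification succeeds.

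To bound the expected query time, I would observe that for any fixed $s\in[-3N\dd 3N)\setminus\{0\}$ the hash $h(s)$ lands in the carry window $\{-2,\ldots,2\}\bmod 2^k$ with probability $\Oh(2^{-k})$ over the random choice of $r$ (a standard multiply-shift near-uniformity bound). Summing over the $\Oh(n^3)$ triples $(a,b,c)$ with $a+b+c\ne 0$, the expected number of spurious triples that cause a report is $\Oh(n^3/2^k)=\Oh(n^{1-\delta})$. On NO instances this bounds $|\tilde C|$ and hence the verification cost by $\Oh(n^{2-\delta})$ in expectation; on YES instances we may halt at the first successful verification, and in expectation we process only $\Oh(n^{1-\delta})$ spurious candidates before doing so. Combined with the $\Oh(n^{2-\eps})$ cost of the $\Oh(1)$ \TSP\ oracle calls, the total expected query time is $\Oh(n^{2-\min(\eps,\delta)})$, which, together with the $\Oh(n^d)$ preprocessing, contradicts \cref{lem:prepr}.

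The main obstacle, and what separates this argument from the classical decision-version 3SUM hashing reductions, is controlling spurious reports in the \emph{reporting} setting: each hash collision forces an explicit $\Oh(n)$ verification of an extra candidate $c$, so the universe size $2^k\ge n^{2+\delta}$ must be chosen as a function of $\delta$ to keep the expected number of collisions $o(n)$. A secondary subtlety is that the random $r$ defining $h$ must be sampled at preprocessing time and reused across all queries; this is permitted because \cref{lem:prepr} explicitly allows randomized, expected-time preprocessing.
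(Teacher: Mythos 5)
Your proposal follows essentially the same route as the paper's proof: the same multiply-shift hash $h$ with the almost-linearity property, preprocessing $h(A)$ for the hypothetical \TSP{} oracle, a constant number of shifted oracle calls to account for the carries, an $\Oh(n)$-time verification of each reported candidate, and the same $\Oh(n^3/2^k)=\Oh(n^{1-\delta})$ bound on the expected number of spurious candidates (where the paper proves the per-triple probability $\Oh(2^{-k})$ by an explicit case analysis on the $2$-adic valuation of $a+b+c$, which you defer to a standard near-uniformity bound---acceptable in a sketch). The one step that is wrong as written is the construction of the shifted instances: with $C'_\xi=\{(h(c)+\xi)\bmod 2^k : c\in C\}$ all three input sets are nonnegative, so the integer equation $a'+b'+c'=0$ demanded by the \TSP{} definition is satisfiable only when all three hashes vanish, and the oracle calls detect nothing. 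One must instead offset downward so that the integer condition $h(a)+h(b)+h(c)=j\cdot 2^k+\xi'$ is captured for each $j\in\{0,1,2\}$ and each admissible carry $\xi'$; concretely, the paper uses the seven instances $(h(A),\,h(B)-j_1 2^u,\, h(C)-j_2 2^u+\xi')$. This fix is mechanical and does not affect the rest of your argument, whose structure and accounting match the paper's.
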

\begin{proof}
We shall demonstrate an algorithm violating the 3SUM conjecture via \cref{lem:prepr}.
If the input instance already satisfies $N\le n^{2+\delta}$, there is nothing to do.
Thus, we henceforth assume $N>n^{2+\delta}$.
Let $v=\ceil{\log 3N}$ and $u=\floor{\log n^{2+\delta}}$.
In the preprocessing, we draw a uniformly random \emph{odd}
integer $\alpha\in [0\dd 2^{v})$,
which defines a hash function $h: \Z \to [0\dd 2^u)$
with $h(x)=\floor{\frac{\alpha x\bmod 2^{v}}{2^{v-u}}}$ for $x\in \Z$.
The key property of this function is that $(h(a)+h(b)+h(c)-h(a+b+c))\bmod 2^u\in \{0,-1,-2\}$ holds for all $a,b,c\in \Z$. 
At the preprocessing stage, we also preprocess $(n,2^u,h(A))$ for the hypothetical \TSP\ algorithm
(note that $2^u \le n^{2+\delta}$).
Overall, the preprocessing stage costs $\Oh(n^d)$ time.

In the main phase, we solve the following \TSP\ instances, each of size at most $n$
and over universe $[-2^u\dd 2^u)$, denoting $X+y:=\{x+y: x\in X\}$:
\begin{itemize}
	\item $(h(A),h(B),h(C))$,
	\item $(h(A),h(B),h(C)-2^u+2)$, 
	\item $(h(A),h(B),h(C)-2^u+1)$, 
	\item $(h(A),h(B),h(C)-2^u)$, 
	\item $(h(A),h(B)-2^u,h(C)-2^u+2)$,
	\item $(h(A),h(B)-2^u,h(C)-2^u+1)$, 
	\item $(h(A),h(B)-2^u,h(C)-2^u)$;
\end{itemize}
this step costs $\Oh(n^{2-\eps})$ time.
Combining the results of these calls,
in $\Oh(n)$ time we derive \[S:=\{c\in C : h(a)+h(b)+h(c)\in \{0,2^u-2, 2^u-1,2^u,2\cdot 2^u-2,2\cdot 2^u-1,
2\cdot 2^u\}\}.\]
Finally, for each $c\in S$, we check in $\Oh(n)$ time whether $a+b+c=0$ holds for some $a\in A$ and $b\in B$. Upon encountering the first witness $c\in S$, we return YES.\@
If no witness is found, we return NO.\@

Let us analyze the correctness of this reduction.
If we return YES, then clearly $a+b+c=0$ holds for some $a\in A$, $b\in B$, and $c\in C$.
For the converse implication, suppose that $a+b+c=0$ holds for some $a\in A$, $b\in B$, and $c\in C$.
Then, $(h(a)+h(b)+h(c)-h(a+b+c))\bmod 2^u=(h(a)+h(b)+h(c))\bmod 2^u \in \{0,-1,-2\}$.
Given that $h(a),h(b),h(c)\in [0\dd 2^u)$, 
this means that $h(a)+h(b)+h(c)\in \{0,2^u-2, 2^u-1,2^u,2\cdot 2^u-2,2\cdot 2^u-1,
2\cdot 2^u\}$, i.e., $c\in S$. Consequently, we are guaranteed to return 
YES while processing $c\in S$ at the latest.

It remains to bound the expected running time.
For this, it suffices to prove that there are, in expectation, $\Oh(n^{1-\delta})$ triples $(a,b,c)\in A\times B\times C$ such that $a+b+c\ne 0$ yet $h(a)+h(b)+h(c) \in \{0,2^u-2, 2^u-1,2^u,2\cdot 2^u-2,2\cdot 2^u-1,
2\cdot 2^u\}$ (in particular, this means that, in expectation, $S$ contains at most $\Oh(n^{1-\delta})$ non-witnesses; verifying all of them costs $\Oh(n^{2-\delta})$ expected time in total).
Specifically, we shall prove that each triple satisfies the aforementioned condition with probability $\Oh(n^{-2-\delta})$.

Due to the fact that $(h(a)+h(b)+h(c)-h(a+b+c))\bmod 2^u \in \{0,-1,-2\}$,
the bad event holds only if $a+b+c\ne 0$ yet $h(a+b+c)\in \{0,1,2,2^u-2,2^u-1\}$.
Let $a+b+c=2^t \beta$ for an integer $t\in \Z_{\ge 0}$ and odd integer $\beta\in \Z$.
Due to $|a+b+c|\le 3N \le 2^v$, we must have $t\in [0\dd v)$.
\begin{itemize}
\item If $t > v-u+1$, then $h(a+b+c)$ is uniformly random odd multiple of $2^{t-v+u}$ within $[0\dd 2^u)$. 
Hence, $\Pr[h(a+b+c)\in\{0,1,2,2^u-2,2^u-1\}]=0$.
\item If $t= v-u+1$, then $h(a+b+c)$ is a uniformly random odd multiple of $2$  within $[0\dd 2^u)$.
Hence, $\Pr[h(a+b+c)\in \{0,1,2,2^u-2,2^u-1\}]\le \frac{2}{2^{v-2}}=\frac{8}{2^v}$.
\item If $t= v-u$, then $h(a+b+c)$ is a uniformly random odd multiple of $1$ within $[0\dd 2^u)$.
Hence, $\Pr[h(a+b+c)\in\{0,1,2,2^u-2,2^u-1\}]\le \frac{2}{2^{v-1}}=\frac{4}{2^v}$.
\item If $t < v-u$, then $h(a+b+c)$ is a uniformly random element of  $[0\dd 2^u)$.
Hence, $\Pr[h(a+b+c)\in \{0,1,2,2^u-2,2^u-1\}]\le \frac{5}{2^{v}}$.
\end{itemize}
Overall, the probability is bounded by $\frac{8}{2^v} = \Oh(n^{-2-\delta})$.
\end{proof}

We are now in a position to give the lower bound for the dynamic $k$-mismatch problem.

\begin{theorem}\label{thm:3sum_lower}
	Suppose that, for some constants $p>0$, $\eps>0$, and $0<c<\frac12$,
	there exists a dynamic $k$-mismatch algorithm that solves instances satisfying $k=\ceil{m^c}$
	using initialization in $\Oh(n^p)$ expected time, updates in $\bigO{k^{1-\eps}}$ expected time, and	queries in $\bigO{k^{1-\eps}}$ expected time.
	Then, the 3SUM conjecture fails.
	This statement remains true when updates are allowed in either the pattern or the text (but not both).
\end{theorem}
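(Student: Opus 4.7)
My plan is to reduce from the \TSP\ problem via \cref{thm:hsu} to the hypothesized fast dynamic $k$-mismatch algorithm. Fix constants $c \in (0, 1/2)$, $p$, $\eps$ and assume such an algorithm exists. Pick $\delta \in (0, 1/c - 2)$; then, given any \TSP\ instance $(n, N, A, B, C)$ with $|A|+|B|+|C| \le n$ and $N \le n^{2+\delta}$, set pattern length $m := \lceil (2n)^{1/c} \rceil$, so that $k = \lceil m^c \rceil = \Theta(n)$ and $m = \omega(N)$. The $k$-mismatch text length will be $2m$, so the initialization cost $\Oh((2m)^p) = \Oh(n^{p/c})$ is polynomial in $n$, matching the preprocessing stage of \cref{thm:hsu}. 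The main phase will issue $n$ updates and at most $n$ queries, each costing $\Oh(k^{1-\eps}) = \Oh(n^{1-\eps})$, for a total of $\Oh(n^{2-\eps})$; this contradicts \cref{thm:hsu} and hence the 3SUM conjecture.

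The encoding is the standard convolution-via-Hamming-distance trick. Shift $A, B, C \sub [0, 2N)$ so that the condition $a+b+c = 0$ becomes $a+b+c = 3N$. Build a binary string $X$ of length $m$ with $X[j] = 1$ iff $j \in A$ or $j \in [J, J+d)$, where $J := 5N$ sits in a designated padding region and $d := (k+1-n) - |A| \ge 1$ is chosen so that $|X|_1 = k+1-n$ (using $k \ge 2n$ and $|A| \le n$). Set text $T := X \cdot X$ of length $2m$ and pattern $P := 0^m$, and feed $(P, T)$ to the data structure; this completes the preprocessing that depends only on $A$. During the main phase, issue $\Update(P, m-1-b, 1)$ for each $b \in B$, then $n-|B|$ further dummy updates at distinct positions drawn from the safe range $[J+d, m-2N-1)$, which brings $|P|_1$ to exactly $n$; finally issue $\Query(3N - c + 1)$ for each $c \in C$.

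For correctness I rely on three observations: (i) since $T = X \cdot X$, the value $|T[i \dd i+m)|_1 = |X|_1 = k+1-n$ is constant over every valid shift~$i$; (ii) after all updates,
\[ \HD(P, T[i \dd i+m)) = n + (k+1-n) - 2\,\mathrm{coll}_i = k + 1 - 2\,\mathrm{coll}_i, \]
where $\mathrm{coll}_i := |\{j : P[j] = T[i+j] = 1\}|$; and (iii) because $m = \omega(N)$ prevents modular wrap-around of sums $a+b$, and because the $T$-padding region $[J, J+d)$ and the safe $P$-padding range both sit outside the image of the query shifts, no padding position can collide at any $i_c := 3N - c + 1$. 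Hence each collision corresponds bijectively to a pair $(a,b) \in A \times B$ with $a+b+c = 3N$, so $\Query(i_c) \le k$ iff $c$ is a \TSP\ witness, and iterating over $C$ solves the whole instance in $\Oh(n^{p/c} + n^{2-\eps})$ time.

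All updates land in $P$, which yields the bound for the ``updates in the pattern only'' restriction; for the ``updates in the text only'' variant, I swap the roles of $P$ and $T$ symmetrically—initializing $P$ to encode $B$ with the analogous padding and starting $T$ empty, then issuing two updates to $T$ per $a \in A$ (one in each half of the $X \cdot X$ structure), which only doubles the update count. The hard part, which I expect to consume most of the write-up, is the padding bookkeeping: since the threshold $k$ must separate zero from positive collisions with no slack, I must certify that $|T[i\dd i+m)|_1$ is \emph{exactly} the prescribed constant across all query shifts, that $|P|_1$ is driven to \emph{exactly} $n$, and that the $A$-encoding, the $T$-padding, the $B$-encoding, and the $P$-padding regions are pairwise disjoint under every cyclic shift induced by $T = X\cdot X$. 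The remaining arithmetic—checking that the chosen $\delta$, $m$, $d$, and $J$ satisfy all of these constraints—is routine given $c < 1/2$.
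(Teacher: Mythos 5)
Your high-level reduction is the same as the paper's: invoke \cref{thm:hsu} with a constant $\delta>0$ small enough that $N\le n^{2+\delta}$ forces $N=o(m)$ for $m=\Theta(n^{1/c})$, encode the preprocessed set in the string that is fixed at initialization, feed the second set via $\Oh(n)$ updates, and issue one query per $c\in C$, for $\Oh(n^{2-\eps})$ main-phase time. Where you diverge is in how the query answer is decoded. The paper exploits the fact that a query returns the \emph{exact} Hamming distance whenever it is at most $k$: since $\HD(P,\texttt{0}^m)+\HD(T[i\dd i+m),\texttt{0}^m)\le|A|+|B|\le k$, it simply reports $c$ iff the returned distance is strictly smaller than this known baseline. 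No padding is needed and there is no threshold arithmetic at all. You instead engineer $|P|_1$ and $|T[i\dd i+m)|_1$ so that $\HD=k+1-2\,\mathrm{coll}_i$ and the threshold $k$ itself separates witnesses from non-witnesses. That is a legitimate (and slightly more robust) alternative---it would also defeat a purely decision-version oracle---but it is exactly this choice that creates the bookkeeping you flag as the hard part, and that bookkeeping contains an error.

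Concretely, your claim (iii) fails for the parameters you give. A dummy pattern position $j$ in the upper part of your ``safe range'' $[J+d\dd m-2N-1)$, aligned at a query shift $i_c=3N-c+1\le 3N+1$, lands at text position $i_c+j$, which reaches or exceeds $m$ whenever $j\ge m-i_c$; it then reads $X[i_c+j-m]$ with $i_c+j-m\in[0\dd N)$, a window inside the (shifted) $A$-region $[0\dd 2N)$. For instance, a dummy at $j=m-2N-2$ queried at $i_c=3N+1$ reads $X[N-1]$, which may equal $1$, producing a spurious collision and hence a false positive. The fix is easy---confine the dummies to, say, $[J+d\dd m-4N)$, so that $i_c+j<m$ always holds and the aligned position $i_c+j\ge 6N+d+2$ clears both the $A$-block and the block $[J\dd J+d)$---but as written the construction can misreport. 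Two smaller points: for the ``updates in the text only'' variant you initialize $P$ from $B$, whereas \cref{thm:hsu} preprocesses $A$, so you must state that you invoke \cref{thm:hsu} with the roles of $A$ and $B$ exchanged (harmless by symmetry, and what the paper does explicitly); and you should dispose of the degenerate case where $n$ is too small for $m$ to exceed $N$ by the required constant factor, in which case $N=\Oh(1)$ and the instance is solved naively.
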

\begin{proof}
	We shall provide an algorithm contradicting \cref{thm:hsu} for $\delta=\frac{1-2c}{2c}$
	and $d=\frac{p}{c}$.
	Suppose that the task is to solve a size-$\hat{n}$ instance of the \TSP\ problem
	with $A,B,C\sub [-N\dd N)$.
	We set $m=\ceil{\hat{n}^{1/c}}$ (so that $k=\ceil{m}^c \ge \hat{n}$), and $n=2m$,
	and we initialize a pattern to $P=\texttt{0}^m$ and a text to $T=\texttt{0}^{n}$.
	Observe that $m \ge \hat{n}^{1/c} \ge N^{\frac1{c(2+\delta)}} = N^{\frac{2}{1+2c}}$.
	If $N^{\frac{2}{1+2c}} < 2N$, then $N=\Oh(1)$, and we can afford to solve the \TSP\ instance naively.
	Otherwise, we are guaranteed that $m\ge 2N$, 
	and we proceed as follows:
    \begin{itemize}
		\item we set $P[a+N]:=\texttt{1}$ for each $a\in A$;
		\item we set $T[2N-b]:=\texttt{1}$ for each $b\in B$.
	\end{itemize}
	Finally, for each element $c\in C$, we perform a query at position $c+N$,
	and report $c$ if and only if $\HD(P,T[c+N\dd c+N+m))< \HD(P, \texttt{0}^m)+\HD(T[c+N\dd c+N+m),\texttt{0}^m)$. Due to the fact that $\HD(P, \texttt{0}^m)+\HD(T[c+N\dd c+N+m),\texttt{0}^m) \le |A|+|B|\le \hat{n}=k$,
	this can be decided based on the answer to the query.
	Equivalently, we report $c\in C$ if and only if $P[i]=T[c+N+i]=\texttt{1}$ holds for some $i\in [0\dd N)$,
	i.e., $T[a+c+2N]=1$ for some $a\in A$, or, equivalently, $-a-c\in B$, i.e., $a+b+c=0$ for some $b\in B$.
	This proves the correctness of the algorithm.

	As for the running time, note that the preprocessing phase costs 
	$\Oh(n^p)=\Oh(m^p)=\Oh(\hat{n}^{\frac{p}{c}})=\Oh(\hat{n}^d)$ expected time.
	The main phase, on the other hand, involves $\Oh(\hat{n})$ updates and queries,
	which cost $\Oh(\hat{n}\cdot k^{1-\eps})=\Oh(\hat{n}^{2-\eps})$ expected time in total.
	By \cref{thm:hsu}, this algorithm for \TSP\ would violate the 3SUM conjecture.

	If the updates are allowed in the text only, we set up the pattern during the preprocessing phase
	based on the fact that the target value of $P$ depends on $A$ only.
	If the updates are allowed in the pattern only, we exchange the roles of $A$ and $B$
	and set up the text during the preprocessing phase.
\end{proof}

Next, we note that the lower bound can be naturally extended to $c\ge \frac12$.

\begin{corollary}\label{cor:joint}
	Suppose that, for some constants $p>0$, $\eps>0$, and $0<c\le 1$,
	there exists a dynamic $k$-mismatch algorithm that solves instances satisfying $k=\ceil{m^c}$
	using initialization in $\Oh(n^p)$ expected time, updates in $\bigO{\min(\sqrt{m},k)^{1-\eps}}$ expected time, and	queries in $\bigO{\min(\sqrt{m},k)^{1-\eps}}$ expected time.
	Then, the 3SUM conjecture fails.
	This statement remains true when updates are allowed in either the pattern or the text (but not both).
\end{corollary}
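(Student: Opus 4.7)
The plan is to reuse the \TSP\ reduction from the proof of \cref{thm:3sum_lower} essentially verbatim and merely redo the running-time arithmetic in the extended parameter range. For $c < \tfrac12$ there is nothing to prove: then $k = \lceil m^c\rceil \le \sqrt m$, so $\min(\sqrt m, k) = k$ and the hypothesis matches that of \cref{thm:3sum_lower} word for word. The substantive case is $c \in [\tfrac12, 1]$, where $\min(\sqrt m, k) = \sqrt m$ and the assumed per-operation cost is $\Oh((\sqrt m)^{1-\eps}) = \Oh(m^{(1-\eps)/2})$.

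For this range I would instantiate the reduction exactly as before: given a \TSP\ instance of size $\hat n$ with $A,B,C \sub [-N\dd N)$, set $m := \lceil \hat n^{1/c}\rceil$, $n := 2m$, and $k := \lceil m^c\rceil \ge \hat n$; initialize $P := \texttt{0}^m$ and $T := \texttt{0}^n$; imprint $A$ on $P$ and $B$ on $T$ as in \cref{thm:3sum_lower}; and for each $c \in C$ issue one query at position $c+N$. Correctness, including the pattern-only / text-only variants (which merely move the encoding of the known set into the preprocessing phase), is inherited verbatim from \cref{thm:3sum_lower}. Preprocessing costs $\Oh(n^p) = \Oh(\hat n^{p/c})$, which is polynomial since $p/c \le 2p$; when invoking \cref{thm:hsu} I would pick $d := \max(2, p/c)$.

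The total cost of the $\Oh(\hat n)$ updates and queries is
\[
\Oh\bigl(\hat n \cdot m^{(1-\eps)/2}\bigr) \;=\; \Oh\bigl(\hat n^{\,1 + (1-\eps)/(2c)}\bigr),
\]
and the one arithmetic check that matters is $(1-\eps)/(2c) < 1$: this holds because $c \ge \tfrac12$ and $\eps > 0$, so the total time is $\Oh(\hat n^{2-\eps'})$ for a constant $\eps' > 0$. Finally, from $m \ge 2N$ (as in \cref{thm:3sum_lower}, after disposing of the trivial case $N=\Oh(1)$) we get $N \le \hat n^{1/c} \le \hat n^{2}$, so any fixed $\delta > 0$ makes \cref{thm:hsu} applicable and yields the desired violation of the 3SUM conjecture. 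No step here is genuinely hard: the entire argument is exponent bookkeeping, the only point requiring attention being that the weaker hypothesis $\min(\sqrt m, k)^{1-\eps}$ (rather than $k^{1-\eps}$) still suffices precisely because $c \ge \tfrac12$ forces $2c \ge 1 > 1-\eps$.
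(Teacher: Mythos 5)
Your case $c<\frac12$ is fine, but the case $c\ge\frac12$ --- the only case in which the corollary says anything beyond \cref{thm:3sum_lower} --- has a genuine gap. You keep $m=\ceil{\hat n^{1/c}}$ and assert that $m\ge 2N$ holds ``as in \cref{thm:3sum_lower}''. It does not: in that proof, $m\ge 2N$ is derived from $m\ge N^{\frac{2}{1+2c}}$ together with the fact that the exponent $\frac{2}{1+2c}$ exceeds $1$, and this is precisely where $c<\frac12$ is used. For $c\ge\frac12$ the exponent is at most $1$, and one cannot conclude $m\ge 2N$ (nor reduce to the case $N=\Oh(1)$). Concretely, \cref{thm:hsu} only supplies hard instances with $N$ as large as roughly $\hat n^{2+\delta}$ for some fixed $\delta>0$ (its internal hashing targets a universe of size $2^u\approx \hat n^{2+\delta}$, and the false-positive analysis needs the target universe to be at least that large), whereas your pattern has length $m\le \hat n^{1/c}+1\le \hat n^{2}+1\ll 2N$, so the positions $a+N\in[0\dd 2N)$ at which you need to plant $A$ do not fit inside $P$. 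The follow-up claim that ``$N\le\hat n^{1/c}\le\hat n^2$, so any fixed $\delta>0$ makes \cref{thm:hsu} applicable'' inverts the logic: an algorithm that only handles $N\le\hat n^2$ does not satisfy the hypothesis of \cref{thm:hsu} for any $\delta>0$.

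The gap is repairable, but it needs an actual idea rather than exponent bookkeeping. One direct fix is to enlarge the pattern: take $m=\Theta(N)=\Theta(\hat n^{2+\delta})$ so that $m\ge 2N$ by construction; then $k=\ceil{m^c}\ge\hat n$ still holds since $c\ge\frac12$, the per-operation cost becomes $\Oh(m^{(1-\eps)/2})=\Oh(\hat n^{(2+\delta)(1-\eps)/2})$, and the total over $\Oh(\hat n)$ operations is subquadratic only if $\delta<\frac{2\eps}{1-\eps}$ --- a constraint on $\delta$ that you would have to impose explicitly, contrary to your ``any fixed $\delta>0$''. The paper avoids re-running the reduction altogether: for $c\ge\frac12$ it notes that the assumed structure with threshold $k=\ceil{m^c}$ simulates one with the smaller threshold $\hat k=\ceil{m^{\hat c}}$ for $\hat c=\frac{1-\eps}{2-\eps}<\frac12$, at per-operation cost $\Oh((\sqrt m)^{1-\eps})=\Oh(\hat k^{1-\eps/2})$, and then invokes \cref{thm:3sum_lower} for $\hat c$ as a black box.
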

\begin{proof}
	When $c<\frac12$, the result holds directly due to \cref{thm:3sum_lower}.
	When $c\ge \frac12$, we prove that the 3SUM conjecture would be violated
	through \cref{thm:3sum_lower} with $\hat{c}=\frac{1-\eps}{2-\eps}$ and $\hat{\eps}=\frac{\eps}{2}$.
	Since the $\hat{k}$-mismatch problem with $\hat{k}=\ceil{m}^{\hat{c}}$ can be simulated using an instance of the $k$-mismatch problem with $k=\ceil{m}^c$,
	we note that, in the former setting, the queries and updates can be hypothetically implemented in $\Oh((\sqrt{m})^{1-\eps})
	= \Oh(\hat{k}^{\frac{1-\eps}{2\hat{c}}})=\Oh(\hat{k}^{\frac{2-\eps}{2}})=\Oh(\hat{k}^{1-\hat{\eps}})$ expected time, violating the 3SUM conjecture via \cref{thm:3sum_lower}.
\end{proof}

\subsection{Lower Bound for \boldmath $m\ll n$}
While most of the work in this paper focuses on the case where the length of
the pattern is linear in the length of the text, for completeness, we provide
a lower bound that is only of interest when the pattern is considerably shorter.
Our lower bound is
conditioned on the Online Matrix-Vector Multiplication
conjecture~\cite{STOC:HKNS15}, which is often used
in the context of dynamic algorithms. 

In the Online Boolean Matrix-Vector Multiplication (OMv) problem, we
are given as input a Boolean matrix $M\in\{0,1\}^{n\times n}$. Then, a
sequence $n$ vectors $v_1, \ldots, v_n\in \{0,1\}^n$ arrives in
an online fashion. For each such vector $v_i$, we are required to
output $Mv_i$ before receiving $v_{i+1}$.

\begin{conjecture}[OMv Conjecture~\cite{STOC:HKNS15}]\label{conj:OMv}
  For any constant $\epsilon>0$, there is no
  $\Oh(n^{3-\epsilon})$-time algorithm that solves OMv correctly with
  probability at least $\frac23$.
\end{conjecture}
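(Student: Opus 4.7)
The final statement is a conjecture, not a theorem, so there is no proof to sketch in the usual sense — the OMv Conjecture is adopted as a hardness assumption, analogous to the role that the 3SUM conjecture plays earlier in the paper. My proposal therefore addresses what an attack on the conjecture, in either direction, would entail, and why the literature (following \cite{STOC:HKNS15}) takes it as a working hypothesis rather than attempting a direct proof.

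To \emph{prove} the conjecture, one would have to establish an unconditional time lower bound in a fully general model of computation (the word RAM), allowing arbitrary polynomial-time preprocessing of the Boolean matrix $M$ and bounded-error randomization across the online sequence $v_1,\ldots,v_n$. This is far beyond current techniques: even proving a super-linear lower bound for an explicit problem on the word RAM is a notorious open question, and OMv is at least as powerful as offline Boolean matrix--vector multiplication, for which no non-trivial lower bound is known. A plausible plan of attack would be to route through the cell-probe model (where one can at least hope for $\Omega(\log n)$-type bounds per query), or via lifting from communication-complexity problems with known strong lower bounds; however, no such reduction is presently known, and devising one is exactly the main obstacle.

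To \emph{refute} the conjecture, one would design an algorithm that, after arbitrary polynomial preprocessing of $M$, answers the $n$ online queries in total time $\Oh(n^{3-\eps})$ with success probability at least $\tfrac{2}{3}$. Natural approaches — the Method of Four Russians, bit-packing into machine words of width $w=\Omega(\log n)$, and fast rectangular matrix multiplication applied to batches of queries — have been studied and yield only subpolynomial (polylogarithmic) improvements over the cubic bound, matching the conjecture's form. The structural obstruction is that $\Theta(n^2)$ fresh bits of input arrive between consecutive queries, which blocks the amortization that drives algebraic speedups in the offline setting where all vectors are known in advance.

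Given these obstacles in both directions, the plan in this paper, as in \cite{STOC:HKNS15} and the broader fine-grained complexity literature, is not to prove \cref{conj:OMv} but to take it as an axiomatic assumption and use it in downstream conditional lower bounds for the dynamic $k$-mismatch problem in the regime $m \ll n$, where the preprocessing of the (longer) text plays the role of preprocessing $M$, and pattern updates play the role of the online vectors $v_i$.
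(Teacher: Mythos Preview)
Your assessment is correct and matches the paper's treatment: \cref{conj:OMv} is stated as a conjecture with no accompanying proof, serving purely as a hardness assumption that feeds into \cref{thm:gOMv} and ultimately \cref{thm:longtext}. The paper makes no attempt to argue for or against the conjecture beyond citing~\cite{STOC:HKNS15}, so there is nothing further to compare.
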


We use the following simplified version of~\cite[Theorem
2.2]{STOC:HKNS15}.

\begin{theorem}[\cite{STOC:HKNS15}]\label{thm:gOMv}
	Suppose that, for some constants $\gamma,\eps>0$,
	there is an algorithm that, given as input a matrix 
  $M\in \{0,1\}^{p\times q}$, with $q=\floor{p^{\gamma}}$, preprocesses $M$ in
  time polynomial in $p\cdot q$, and then, presented with a vector $v\in \{0,1\}^q$,
  computes $Mv$ in time $\Oh(p^{1+\gamma-\eps})$
  correctly with probability at least $\frac23$.
  Then, the OMv conjecture fails.
\end{theorem}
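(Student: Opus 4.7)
The plan is to reduce the full $n\times n$ OMv problem to the rectangular variant of the theorem through a standard block decomposition, following~\cite{STOC:HKNS15}. Given an $n\times n$ matrix $M$ and an online sequence $v_1,\dots,v_n\in\{0,1\}^n$, set $p=n$ when $\gamma\le 1$ and $p=\lceil n^{1/\gamma}\rceil$ when $\gamma>1$, so that $q=\lfloor p^\gamma\rfloor$ always satisfies $\min(p,q)=n^{\Omega(1)}$, $\max(p,q)=\Oh(n)$, and $pq=\Oh(n^{1+\min(\gamma,1/\gamma)})\le n^2$. Pad $M$ with zero rows and columns so that its dimensions become multiples of $p$ and $q$ respectively, and partition the padded matrix into $\Theta(n^2/(pq))$ blocks of size exactly $p\times q$. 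Feed each block independently to the hypothetical preprocessing of the theorem; since each block's preprocessing is polynomial in $pq\le n^2$ and there are at most $n^2$ blocks, the total preprocessing cost remains polynomial in $n$.

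Upon arrival of the online query $v_i$, I would split $v_i$ into $\lceil n/q\rceil$ length-$q$ chunks (padding the last one with zeros) and, for each horizontal strip of $M$ (i.e., each set of $p$ consecutive rows), sum the outputs of the hypothetical algorithm applied to the $p\times q$ blocks of that strip paired with the matching chunks of $v_i$. Concatenating across strips and discarding padded coordinates recovers $Mv_i$. Each query triggers $\Theta(n^2/(pq))$ invocations of the block algorithm, each costing $\Oh(p^{1+\gamma-\eps})$, so the per-query time is $\Oh(n^2\cdot p^{-\eps})$ (using $q=\Theta(p^\gamma)$). Summed over the $n$ online queries, this yields total time $\Oh(n^{3-\eps\cdot\min(1,1/\gamma)})=\Oh(n^{3-\Omega(1)})$, which together with correctness refutes \cref{conj:OMv}.

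The main obstacle is controlling the failure probability: the hypothetical algorithm is correct per invocation only with probability $\tfrac23$, yet the reduction calls it $\Oh(n^3)$ times in total, so naive error accumulation would destroy the $\tfrac23$ correctness guarantee demanded of the OMv solver. I would resolve this by standard probability amplification, running each block multiplication $\Theta(\log n)$ times independently and taking a coordinatewise majority vote of the resulting Boolean vectors; this drives the per-block failure probability below $1/(3n^3)$, so that a union bound across all $\Oh(n^3)$ invocations leaves the entire sequence of outputs correct with probability at least $\tfrac23$. The extra $\Oh(\log n)$ factor is comfortably absorbed into the polynomial savings of $p^{-\eps}$, preserving the $\Oh(n^{3-\Omega(1)})$ bound required for the contradiction.
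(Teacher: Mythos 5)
The paper does not actually prove this statement---it is quoted as a simplified form of \cite{STOC:HKNS15}, Theorem 2.2---so there is no internal proof to compare against. Your block-decomposition reduction, with the dimension choice $p=n^{\min(1,1/\gamma)}$ (up to rounding) and $\Theta(\log n)$-fold repetition plus majority vote to drive the per-invocation error below $1/(3n^3)$, is correct and is essentially the standard argument by which the cited result is established.
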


\cref{thm:gOMv} lets us derive our lower bound for the dynamic $k$-mismatch problem.

\begin{theorem}\label{thm:longtext}
	Suppose that, for some constants $\gamma,\eps>0$,
	there is a dynamic $k$-mismatch algorithm that solves instances
	satisfying $k= 2\floor{(\frac{n}{m})^\gamma}$,
	with preprocessing in $\Oh(n^{\Oh(1)})$ time, updates of the pattern in $\bigO{(\frac{n}{m})^{1-\eps}}$ time,
	and queries in $\bigO{k^{1-\eps}}$ time, providing correct answers with high probability.
	Then, the OMv conjecture fails.
\end{theorem}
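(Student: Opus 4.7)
The plan is to reduce the hypothesized matrix-vector algorithm of \cref{thm:gOMv} to the assumed dynamic $k$-mismatch algorithm, using a three-letter alphabet $\{\sigma,\tau,\mu\}$. Given the input matrix $M\in\{0,1\}^{p\times q}$ with $q=\floor{p^\gamma}$, set $m=q$ and $n=pq$, so that $n/m=p$ and $k=2\floor{(n/m)^\gamma}=2q$. In the preprocessing phase, build the text $T$ of length $n$ whose $i$-th block $T[im\dd im+m)$ encodes row $M_i$ by placing $\sigma$ where $M_{ij}=1$ and $\mu$ where $M_{ij}=0$, initialize the pattern as $P=\tau^m$, and set up the claimed dynamic $k$-mismatch structure on $(T,P)$ in $\Oh(n^{\Oh(1)})$ time.

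When a query vector $v\in\{0,1\}^q$ arrives, apply at most $q$ pattern substitutions to set $P[j]=\sigma$ for each $j$ with $v_j=1$, then query the structure at every position $im$ for $i\in[0\dd p)$; after the vector has been answered, roll $P$ back to $\tau^m$ with another at most $q$ substitutions, so that the next round starts from a known state. The encoding is chosen so that a position $j$ contributes a match to $\HD(P,T[im\dd im+m))$ exactly when $v_j=M_{ij}=1$ (in each of the three remaining cases the letters differ), giving
\[
\HD(P,T[im\dd im+m)) = q - \langle M_i,v\rangle,
\]
where $\langle M_i,v\rangle := |\{j : M_{ij}=v_j=1\}|$. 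The Boolean entry $(Mv)_i$ is therefore $1$ iff the returned distance is strictly less than $q$. Since $\HD\le q\le k$ always holds, the $k$-mismatch oracle never returns $\infty$, so the exact distance is available.

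Summing the costs per round, the $\Oh(q)$ pattern substitutions cost $\Oh(q\cdot(n/m)^{1-\eps})=\Oh(p^{1+\gamma-\eps})$, while the $p$ queries cost $\Oh(p\cdot k^{1-\eps})=\Oh(p^{1+\gamma(1-\eps)})$, for a total of $\Oh(p^{1+\gamma-\eps'})$ with $\eps':=\min(\eps,\gamma\eps)>0$. The preprocessing cost $\Oh(n^{\Oh(1)})=\Oh((pq)^{\Oh(1)})$ is polynomial in $p\cdot q$, as required by \cref{thm:gOMv}. Across a round we issue at most $p^2\le n^{\Oh(1)}$ queries, so a union bound over the high-probability-in-$n$ guarantees of the data structure keeps the per-round failure probability far below $\tfrac13$, matching the correctness requirement in \cref{thm:gOMv}. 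The resulting algorithm would violate that theorem, and hence the OMv conjecture.

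The main subtlety to verify is that the reduction uses \emph{only} pattern updates (text updates are governed by $U(n,k)$, which the theorem statement does not bound) and that the Hamming distance genuinely stays within the threshold $k$ so the oracle is always informative. Both conditions are built into the encoding: $T$ is frozen after preprocessing, and $\HD\le m=q\le k$ is automatic, so the parameter choice $k=2\floor{(n/m)^\gamma}$ in the theorem statement is exactly what the reduction consumes.
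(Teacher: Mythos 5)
Your reduction is correct and follows essentially the same route as the paper's proof: encode the matrix rows in consecutive blocks of the text, encode each incoming vector via $\Oh(q)$ pattern updates, and read off $(Mv)_i$ from the exact Hamming distance at position $im$, with the identical cost analysis yielding $\Oh(p^{1+\gamma-\min(1,\gamma)\eps})$ per vector. The only (cosmetic) difference is that you use a ternary alphabet with one character per matrix entry, whereas the paper uses binary three-character gadgets ($\texttt{100}/\texttt{111}$ vs.\ $\texttt{001}/\texttt{111}$) and hence $m=3q$, $n=3pq$; both encodings make the distance drop below its baseline exactly when $M[i,j]=v[j]=1$ for some $j$.
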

\begin{proof}
    Given a matrix $M\in \{0,1\}^{p\times q}$, we 
	set $m = 3q$, $n=3pq$, and $k=2q$ (so that $k=2q=2\floor{p^\gamma}=2\floor{(\frac{n}{m})^\gamma}$ holds).
	At the preprocessing phase, we initially set $P=\texttt{0}^{m}$ and $T=\texttt{0}^{n}$.
	As for the text, for each $i\in [0\dd p)$ and $j\in [0\dd q)$,
	we set $T[3iq+3j\dd 3iq+3j+3)$ to $\texttt{100}$ if $M[i,j]=0$
	and to $\texttt{111}$ if $M[i,j]=1$.

	When a vector $v$ arrives, the pattern is set as follows:
	for each $j\in [0\dd q)$, we set $P[3j\dd 3j+3)$ to $\texttt{001}$ if $v[j]=0$
	and  to $\texttt{111}$ if $v[j]=1$.
	This requires $\bigO{q}$ update calls to our dynamic data structure. Queries are then made
	at position $im$ for all $i\in [0\dd p)$. By definition of the Hamming distance,
	$\HD(\texttt{100},\texttt{001})=\HD(\texttt{100},\texttt{111})=\HD(\texttt{111},\texttt{001})=2$, whereas $\HD(\texttt{111},\texttt{111})=0$; therefore, the only time
	that the returned Hamming distance will be
	less than $k=2q$ is when $M[i,j]=v[j]=1$ for some $j\in [1\dd q]$, i.e., $(Mv)[i]=1$.
	Therefore, the OMv product can
	be computed by making $\bigO{p}$ queries and $\bigO{q}$ updates to the dynamic $k$-mismatch data structure as before.
	The total cost of these operations is $\Oh(pq^{1-\eps}+qp^{1-\eps})
	= \Oh(p^{1+\gamma(1-\eps)}+p^{1+\gamma-\eps})=\Oh(p^{1+\gamma-\min(1,\gamma)\eps})$.
	By \cref{thm:gOMv} with $\hat{\eps}=\min(1,\gamma)\eps$, this would violate the OMv conjecture.
\end{proof}

\section*{Acknowledgements}
We are grateful to Ely Porat and Shay Golan for insightful conversations about the dynamic $k$-mismatch problem at an early stage of this work.
\bibliographystyle{plainurl}
\bibliography{bibliography}
\end{document}